\documentclass[11pt]{article}

\usepackage[margin=1in]{geometry}
\usepackage{amsmath,amssymb,amsthm}
\usepackage{graphicx}
\usepackage{enumitem}
\usepackage[round]{natbib}
\usepackage[colorlinks=true,linkcolor=blue,urlcolor=blue,citecolor=blue]{hyperref}

\usepackage{tikz}
\usetikzlibrary{arrows.meta, positioning, calc, fit, backgrounds}
\tikzset{
  txn/.style={circle, draw=black!70, fill=black!5, minimum size=8.5mm, inner sep=1pt, font=\small},
  multi/.style={draw=black, very thick},
  uedge/.style={-{Stealth[length=2.6mm]}, semithick},
  aone/.style={uedge, draw=blue!65!black},
  atwo/.style={uedge, draw=orange!85!black},
  aboth/.style={uedge, draw=violet!85!black, thick},
  cyc/.style={uedge, draw=red!75!black, thick},
  imposed/.style={-{Stealth[length=2.6mm]}, densely dashed, draw=teal!65!black, semithick},
  unranked/.style={{Stealth[length=2mm]}-{Stealth[length=2mm]}, densely dotted, draw=gray!90},
  lab/.style={font=\scriptsize, fill=white, inner sep=1.5pt},
  grp/.style={draw=gray!70, densely dashed, rounded corners=6pt, inner sep=9pt}
}

\theoremstyle{plain}
\newtheorem{lemma}{Lemma}
\newtheorem{proposition}{Proposition}

\theoremstyle{definition}
\newtheorem*{defn}{Definition}
\newtheorem*{exmpl}{Example}

\theoremstyle{remark}

\title{Ordering by Unanimity: Giving Applications Sequencing Rights Without Breaking Composability\thanks{I am grateful to Kushal Babel, Jan Camenisch, Andrei Constantinescu, Lioba Heimbach, Jason Milionis, Mike Setrin, and participants in the Category Labs internal research seminar for their comments and suggestions.}}
\author{Andrea Canidio \\ Category Labs}
\date{\today}

\begin{document}

\maketitle

% ===== Revised front matter: abstract + introduction + literature review =====
% Paste this over the current front matter -- everything from the abstract through
% the end of the Literature paragraph (between the title and the model section).
% First person throughout; resolved editorial notes removed.

\begin{abstract}
Blockchain applications may have preferences over the order in which transactions execute: an automated market maker may use an external price feed to price its liquidity, and require that the oracle update incorporating this price execute before any swap; an order-book exchange may want to execute cancellations of limit orders before incoming market orders; an application may run an on-chain auction by executing bids from highest to lowest, so that the first bid wins. The problem is that, currently, the ordering of transactions is chosen by the underlying blockchain and may not be compatible with the requirements of a specific application. In this paper, I tackle this problem by introducing an algorithm called \textit{unanimity override}. The intuition is that when all the applications agree on how to order two transactions, the underlying blockchain should respect this agreement; a default order --- the order in which transactions appear in the block --- settles the rest. The problem with this naive approach is that application unanimity is intransitive and may form cycles, which the algorithm must break. Cycle-breaking is also the rule's main vulnerability because an attacker can insert transactions to manufacture a cycle. Yet two main guarantees hold against any attacker who sets the default order, deploys applications, and inserts transactions. First, all transactions that interact with a single application that expressed preferences are ordered according to that application's preferences, even when they also interact with other applications that did not express preferences. Second, \textit{gated} transactions --- those that cannot be outranked in the unanimity order by any transaction crafted by an attacker --- always execute as the applications unanimously prefer, even when they touch many applications. These results identify exactly which preferences a protocol can protect, and they tell applications and senders in advance which transactions will execute in the intended order.
\end{abstract}

\section{Introduction}
 
A blockchain is a single, shared ledger on which multiple applications run and interact. This integration among applications on the same chain is called \textit{composability}, and it is one of the main economic benefits of blockchains. For example, when a new asset is issued on chain --- a native cryptocurrency, or an on-chain version of a traditional asset such as a stock or bond --- anyone can then make it tradable on an existing market or usable as collateral in an existing lending protocol. Doing so requires a few steps, but no one's permission. Applications also build on one another: a lending protocol can liquidate collateral through an existing market, again, without needing an explicit integration or agreement from this market. And a single transaction can touch several applications at once --- passing through an on-chain router to trade across markets, or trading on two markets at once to arbitrage away a price difference. Composability holds the promise of a globally integrated, decentralized financial system built on blockchains. 
 
At the same time, the fact that all applications run on the same blockchain also limits the applications that can be deployed to those compatible with the blockchain's underlying design. This limitation is most evident in how transactions are ordered. A blockchain's participants --- validators, nodes, and users --- must agree not only on which transactions to include in a block, but also on the order in which they appear: transactions execute in the order the block lists them, and reordering them changes the execution outcome. For this reason, the right to order transactions is usually allocated to a block \emph{proposer} or \emph{sequencer}, who may order them at will or follow a fixed, protocol-defined rule such as priority-fee ordering.
 
Allocating ordering rights to the underlying blockchain, however, constrains the application design space, and this constraint appears to be binding. For example, a PropAMM --- an automated market maker (AMM) that takes its price from an oracle --- needs that oracle's update to execute before any swap against it. PropAMMs emerged on Solana because this requirement was compatible with Solana's default ordering rule, but they cannot be easily implemented on other chains. An exchange such as Hyperliquid gives execution priority to order cancellations, and created its own chain to do so. Rollups are off-chain mechanisms that aggregate and order transactions. Like independent chains, rollups acquire ordering rights at the cost of isolation. Finally, several decentralized-finance designs have stalled or failed because they require a specific order to function: an AMM that auctions the right to rebalance its pool by ranking swaps on the fee they pay, or an on-chain auction that must process bids from highest to lowest. 

 In this paper, I propose an ordering rule that lets applications express preferences over how transactions are ordered, and respects those preferences as far as possible without sacrificing safety or composability. The rule, which I call \emph{unanimity override}, is easy to state: whenever the applications with a stake in a pair of transactions agree on how to order them, the protocol follows that agreement, unless the resulting preferences form a cycle, in which case a fallback breaks the cycle. A default order (i.e., the order in which transactions appear in the block) is then used to recover a complete execution order.
 
 The key to the mechanism is how it breaks cycles. I show that cycles can arise only when at least two transactions interact with multiple \emph{opinionated} applications --- applications that have expressed ordering preferences. Breaking a cycle then means \emph{demoting} one or more of these transactions. The protocol removes a demoted transaction's unanimity comparisons with the others in its cycle and places it after them in the execution order; when several transactions are demoted, the default order ranks them. This breaks the cycle, but overrides the applications' preferences for the demoted transactions. This, in turn, can be exploited by an attacker, who can insert transactions to create a cycle and manipulate the execution order.
 
The main results identify which of an application's preferences are protected. Two main guarantees hold even against an attacker who controls the default order, deploys applications, and injects transactions. First, transactions that interact with only one opinionated application are executed according to that application's preferences. The reason is that, when both transactions interact with a given opinionated application and at least one of them is single-opinion, only that application can express preferences over the ranking of these transactions. Hence, the unanimity ranking coincides with that application's preferences. Because the fallback never demotes single-opinion transactions, the execution order always enforces the unanimity ranking between single-opinion transactions, which coincides with the relevant application's preferences. Second, a transaction that no attacker-crafted transaction can outrank in the unanimity ranking cannot lie on a cycle, so the fallback never demotes it; it executes in the order the applications unanimously prefer, no matter how many applications it interacts with. A symmetric guarantee
holds at the bottom of the order: a transaction that always ranks below any
attacker-crafted transaction in the unanimity ranking never lies on a
cycle, so every unanimous comparison placing it below another transaction is enforced --- a settlement transaction that every application ranks last
is the natural example.

I then argue that these guarantees cannot be improved upon, at least ex
ante --- before the block is formed. The transactions the guarantees leave
out interact with multiple opinionated applications and are gated neither
above nor below. These are exactly the transactions over which applications
can disagree. To see this, take a transaction $x$ not covered by the above guarantees: it interacts with at least two opinionated application; one
application it interacts with ranks some transaction $y_1$ strictly above
$x$; a different application it interacts with ranks some transaction
$y_2$ strictly below $x$. A single transaction $y$ can then bundle the
interactions of $y_1$ and $y_2$. If the block contains both $x$ and $y$,
the two applications disagree on their order: one wants $y$ to execute
first, the other wants $x$. No ordering rule, whatever its form, can
satisfy both --- so none can guarantee every application's preferences over these transactions.

Returning to the applications that motivated the rule, oracle updates and cancellations --- including cancellations that span many markets --- are protected by the second guarantee: no transaction an attacker can craft ever ranks strictly above them in the relevant applications' rankings, so they cannot be attacked. Auctions, and AMMs that rank swaps by the fee they pay, rest on the first guarantee together with the sender's incentive: a sender keeps a transaction safe by routing it through a single opinionated application. Because the execution order of single-opinion transactions always follows the relevant application's preferences, unanimity override preserves --- and extends --- the guarantees a rollup provides. In the case of a rollup, transactions are, by assumption, confined to interacting only with the rollup --- they cannot touch other applications deployed on the same chain. Unanimity override also orders these transactions according to the rollup's preferences. But it extends the guarantee to transactions that also interact with non-opinionated applications on the same chain.\footnote{A case in point is the bridging of assets between a blockchain and its rollup. Non-native assets (i.e., assets other than ETH on Ethereum) are represented on chain by a smart contract --- an application, in the language of this paper --- with no ordering preferences. Because a rollup cannot interact directly with these contracts, assets must be bridged: a third party locks them on the base chain and ``reissues'' them on the rollup. Bridges have been a constant source of security concerns, and several have been hacked over the years. Under unanimity override, an application with rollup-like sequencing guarantees is deployed directly on the base chain, where its transactions can also touch the asset contracts: no bridge is needed.} And it allows the rollup's transactions to interact with other opinionated applications, although in that case the ordering may not be guaranteed.

I then compare the guarantees of unanimity override against two benchmarks: one where the attacker freely chooses the execution order, and one where the order is determined by the priority fee each transaction pays. 
Against an attacker who can order transactions freely, unanimity override strictly reduces what the attacker can do. Against priority-fee ordering, the comparison is less clear-cut. Unanimity override protects the transactions the guarantees cover, but it can also make some other transactions cheaper to attack, and it enables griefing attacks that priority-fee ordering rules out. 

A final consideration is that users, and not only applications, may have
ordering preferences. When an application declares its preferences and a user
submits a transaction to it, I treat the application's preferences as standing
in for the user's: by respecting the former, the protocol also respects the
latter. The converse need not hold. An application may express no
preference --- a traditional AMM, for example --- while its users still have
one, typically not to be front-run. Whether unanimity override protects these
transactions depends on how the repaired unanimity relation is completed into
a full execution order. I show that when the completion uses Kahn's
topological-sort algorithm, these transactions keep the benchmark's
protection: front-running them is never cheaper under unanimity override than
under priority-fee ordering.
 
Finally, an important caveat is that the paper assumes throughout a strong form of censorship resistance. Without it, the proposer could manipulate the order simply by withholding transactions. The analysis is therefore complementary to work on multiple concurrent proposers, transaction encryption, and inclusion lists, all of which aim to provide censorship resistance (see, for example, \citealp{avarikioti2023fnf}, \citealp{garimidi2025multipleconcurrentproposers}, \citealp{cryptoeprint:2025/194}, \citealp{babel2026cadenceextremepipeliningmultiple}).

The rest of the paper proceeds as follows. The remainder of this section reviews the related literature. The following section presents the model: a block of transactions ordered by a default rule, and applications that rank the transactions they interact with. Section~\ref{sec:override} introduces the unanimity-override rule and the fallback that resolves cycles. Section~\ref{sec:threat-model} presents the threat model and derives the rule's guarantees. Section~\ref{sec:impossibility} shows that these guarantees cannot be improved upon: no ordering rule, whatever its form, can protect the transactions they leave out. Section~\ref{sec:no-opinion} shows that the rule also protects the transactions over which applications express no preferences: front-running them requires outbidding them, exactly as under priority-fee ordering. Section~\ref{sec:robustness} generalizes the results to any aggregation rule that is Paretian, provided again that cycles are broken according to a specific logic. Section~\ref{sec:interaction-sets} discusses a practical issue: how to determine which applications a transaction interacts with --- and therefore which applications can express preferences over it. The final section concludes.
 
\paragraph{Literature.} To my knowledge, the only paper that directly takes up the question of how to delegate ordering rights to applications is \citet{durvasula2026monotoneprioritysystemfoundations}. It is closely related, but its question is different: there, contract
developers attach integer priorities to their function calls, subject to a
monotonicity restriction that makes all declared constraints jointly
enforceable. Here,
applications rank transactions independently, and their preferences may
not be jointly enforceable; the goal is to identify the guarantees the protocol can provide
nonetheless. The expressivity also differs: priorities attached to function
calls are enough to put cancellations or oracle updates at the top of a
block, but not to rank competing swaps by how much they pay. The paper also discusses, with useful references, the debate around application-specific sequencing \citep{ferreira2022credible} --- an \emph{off-protocol} approach, in which transactions are sequenced outside the base protocol. That approach is related to mine but distinct from it: I study a rule internal to the protocol.
 
The aggregation of preferences into an execution order is an exercise in social choice.  The link between transaction ordering and social choice has already been noted \citep{ramseyer2024brief}, especially by the literature on \emph{order fairness} \citep{kelkar2020aequitas,kelkar2023themis}. Those papers aim to aggregate the orders in which validators receive transactions, and they also have to deal with cycles. In both settings, an attacker can try to disrupt the protocol by creating cycles \citep{vafadar2023condorcet}.
 
The goal of this paper is nonetheless very different from that of the literature on fair ordering. Order fairness enforces a protocol-determined notion of fairness; the goal of this paper is to expand the application design space, giving applications control over how their own transactions are ordered. The main results are therefore the guarantees that unanimity override provides to applications over the ordering of transactions. Because the two tackle different problems, unanimity override and fair ordering can coexist: a block could be ordered fairly and its transactions then executed in the order determined by unanimity override. I speculate that the two are complementary: I derive the guarantees in an adversarial environment where the attacker can manipulate the default order at will; those guarantees can only strengthen when order fairness limits the attacker's degrees of freedom.
 
The non-transitivity of the unanimity ranking is a direct consequence of  Arrow's impossibility theorem \citep{arrow1963}, which states that no rule aggregating three or more alternatives can be at once Paretian, non-dictatorial, independent of irrelevant alternatives, and guaranteed to return a transitive order. My composite rule --- unanimity followed by the fallback --- is Paretian, non-dictatorial, and always returns a complete transitive order. It escapes the theorem by giving up the remaining axiom, independence of irrelevant alternatives: when there is a cycle, the order between two transactions can depend on which other transactions share the cycle and on the default order, not on the two alone.

The literature following \citet{arrow1963} has studied various ways to deal with the impossibility theorem. However, the problem I consider differs from those studied in that literature in two ways. First, in the case considered here, the goal is to recover a full execution order, not just a choice function. Second, the effective electorate varies across pairs --- each pair is judged only by the applications that touch both transactions --- so the aggregate relation can contain cycles. The first departure makes a classical response to Arrow's theorem, the \emph{Pareto-extension rule}, inapplicable here. Introduced by \cite{sen1969quasitransitivity,sen1970collective}, the Pareto-extension rule completes the Pareto relation by declaring every Pareto-incomparable pair socially indifferent; it keeps the Pareto principle, independence of irrelevant alternatives, and non-dictatorship, at the cost of full transitivity, since the resulting relation is only quasi-transitive. Quasi-transitivity suffices to define a social choice --- a most-preferred option among a set always exists --- but not to produce a complete order. The second departure connects to the theory of \emph{ordering extensions}, which studies how to complete a binary relation into a full order: \citet{szpilrajn1930extension} shows that any quasi-order extends to a complete order, and \citet{suzumura1976remarks} weakens the requirement to Suzumura consistency --- the absence of cycles containing a strict preference. These results do not apply directly here, because aggregating the preferences of applications with overlapping domains can generate exactly the strict cycles that Suzumura consistency rules out.

\section{The model}

There is a block of transactions $X=\{x_1,x_2,\ldots,x_n\}\subset\mathbb X$, where $\mathbb X$ is the set of all possible transactions. An injective function $U_d:X\rightarrow\mathbb R$ represents a default strict total order over $X$: if $U_d(x_i)>U_d(x_j)$, then $x_i$ is higher in the block than $x_j$. For now, I stay agnostic about who defines $U_d$ --- it may be a protocol rule, such as priority-fee ordering, or the proposer's own ranking.

There is also a set of applications $A=\{a_1,a_2,a_3,\ldots\}$. A transaction may interact with an application --- most simply, by calling one of its functions. For each application $a_t$, let
$\mathbb X_t\subseteq\mathbb X$ be the set of transactions that interact with
$a_t$, and let $X_t\equiv X\cap\mathbb X_t$ be the application's
\emph{interaction domain}: the transactions in the block that interact with
it.

Each application has preferences over the order in which the transactions
that interact with it execute, represented by a function
$U_t:\mathbb X_t\rightarrow\mathbb R$:
\begin{itemize}
  \item if $x_i,x_j\in\mathbb X_t$ and $U_t(x_i)>U_t(x_j)$, then application
  $a_t$ prefers $x_i$ to execute before $x_j$ whenever both appear in the
  same block;
  \item if $x_i,x_j\in\mathbb X_t$ and $U_t(x_i)=U_t(x_j)$, then application
  $a_t$ is indifferent to their execution order.
\end{itemize}
In practice, an application ranks a transaction by what it observes of it:
the function called, the parameters, the value sent, the sender's address,
the priority fee it pays. The model
leaves this content implicit and works directly with the preferences it
induces. Only transactions whose interaction the application accepts belong
to $\mathbb X_t$: a transaction whose interaction with $a_t$ is invalid --- for
example, because it calls a function reserved for other senders --- is treated as
not interacting with $a_t$ at all, and a transaction with valid and invalid
interactions is ranked only by its valid interactions.

In practice, the interaction domains may be difficult to determine, because
actual interactions can depend on the execution order; I return to this issue
in Section~\ref{sec:interaction-sets}. For now, I take the interaction
domains as given, for example because each transaction must declare them
beforehand (similarly to Solana's account list).

Finally, let
\[
  \mathcal O\equiv\{t:a_t\in A\text{ and }U_t\text{ is not constant on }\mathbb X_t\}
\]
be the set of \emph{opinionated applications}. These are the applications that can strictly rank at least one pair of transactions. For every possible transaction $x\in\mathbb X$, define
\[
  T^{\mathrm o}(x)\equiv\{t\in\mathcal O:x\in\mathbb X_t\}
\]
as the set of opinionated applications with which $x$ interacts. I call $x$ a \emph{multi-opinion transaction} if $|T^{\mathrm o}(x)|\geq2$, a \emph{single-opinion transaction} if $|T^{\mathrm o}(x)|=1$, and a \emph{no-opinion transaction} if $T^{\mathrm o}(x)=\varnothing$.

\section{Application unanimity override}\label{sec:override}
 
I now propose a rule determining the order of execution of transactions $\succ_e$. Such a rule is meant to capture the following observation: whenever all applications agree on a given ordering and such ordering has no cycles, the protocol should follow the applications' ordering preferences.
 
Formally, I introduce a directed unanimity relation $\succ_u$.  
 
\begin{defn}
$x_i \succ_u x_j$ if and only if:
\begin{itemize}
  \item $\forall t$ such that $x_i, x_j \in X_t$, $U_t(x_i)\geq U_t(x_j)$;
  \item $\exists t$ such that $x_i, x_j \in X_t$ such that $U_t(x_i) > U_t(x_j)$.
\end{itemize}
\end{defn}

Note that $\succ_u$ over $X$ is neither complete nor transitive. As an example, suppose there are two applications with $X=X_1=X_2=\{x_1,x_2,x_3\}$ and $U_1(x_1)>U_1(x_2)>U_1(x_3)$ but $U_2(x_1)>U_2(x_3)>U_2(x_2)$. That is, both applications agree that $x_1$ should be top of block, but disagree on the ordering of the other two transactions: $x_2$ and $x_3$ cannot be unanimously ranked against each other, so $\succ_u$ on $X$ is not complete. Figure~\ref{fig:unanimous-preferences} shows the resulting unanimity graph.
 
\begin{figure}[ht]
  \centering
  \begin{tikzpicture}
    \node[txn] (x1) at (0,1.9) {$x_1$};
    \node[txn] (x2) at (-1.5,0) {$x_2$};
    \node[txn] (x3) at (1.5,0) {$x_3$};
    \draw[aboth] (x1) -- (x2);
    \draw[aboth] (x1) -- (x3);
    \draw[unranked] (x2) -- node[lab, below=15pt] {unranked: the applications disagree} (x3);
    \node[font=\scriptsize, align=left, anchor=west] at (3.4,1.1)
      {$U_1:\; x_1 \succ x_2 \succ x_3$\\[2pt]
       $U_2:\; x_1 \succ x_3 \succ x_2$};
  \end{tikzpicture}
  \caption{Unanimous preference graph for the two-application example above. Solid arrows are unanimous comparisons of $\succ_u$; the dotted link marks the pair that $\succ_u$ leaves unranked.}
  \label{fig:unanimous-preferences}
\end{figure}
 
To illustrate why $\succ_u$ over $X$ need not be transitive, consider the following example: there are three applications with $X_1=\{x_1,x_2\}$, $X_2=\{x_2, x_3\}$ and $X_3=\{x_3, x_1\}$. If preferences are $U_1(x_1)> U_1(x_2)$, $U_2(x_2)>U_2(x_3)$, and $U_3(x_3)>U_3(x_1)$, then $\succ_u$ generates a cycle: $x_1 \succ_u x_2 \succ_u x_3 \succ_u x_1$ (see Figure~\ref{fig:cycle-intransitivity}).
 
\begin{figure}[ht]
  \centering
  \begin{tikzpicture}
    \node[txn, multi] (x1) at (90:1.7) {$x_1$};
    \node[txn, multi] (x2) at (210:1.7) {$x_2$};
    \node[txn, multi] (x3) at (330:1.7) {$x_3$};
    \draw[cyc] (x1) to[bend right=18] node[lab, left=2pt] {$a_1$} (x2);
    \draw[cyc] (x2) to[bend right=18] node[lab, below=2pt] {$a_2$} (x3);
    \draw[cyc] (x3) to[bend right=18] node[lab, right=2pt] {$a_3$} (x1);
  \end{tikzpicture}
  \caption{Intransitivity example: three applications each with two transactions, whose unanimous preferences generate the cycle $x_1 \succ_u x_2 \succ_u x_3 \succ_u x_1$. Each comparison is supported by a single opinionated application (edge labels); every transaction interacts with two opinionated applications.}
  \label{fig:cycle-intransitivity}
\end{figure}

How to handle cycles is the core of the algorithm presented below. It is therefore useful to establish a structural fact about cycles that I use repeatedly.

\begin{lemma}\label{lem:two-multi-app}
Every cycle of $\succ_u$ contains at least two multi-opinion transactions.
\end{lemma}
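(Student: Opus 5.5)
Each edge of a cycle of $\succ_u$ needs an opinionated "witness" application: if $x_i \succ_u x_j$, some $t$ with $x_i,x_j\in X_t$ has $U_t(x_i)>U_t(x_j)$, so $U_t$ is not constant and $t\in\mathcal O$. Hence every transaction on a cycle interacts with at least one opinionated application, since it has an outgoing edge, and so has $|T^{\mathrm o}|\geq 1$. The plan is to argue by contradiction: suppose a cycle $x_1\succ_u x_2\succ_u\cdots\succ_u x_k\succ_u x_1$ (with $k\geq2$, the $x_i$ distinct) contains at most one multi-opinion transaction. I then show that all witnesses along the cycle must be the same application $t$. That is impossible, because $U_t$ would then be strictly decreasing around a closed loop.

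The key step is a propagation argument along consecutive edges. Take two consecutive edges $x_{i-1}\succ_u x_i \succ_u x_{i+1}$ with witnesses $t$ and $t'$, both containing $x_i$. If $x_i$ is single-opinion, then $T^{\mathrm o}(x_i)=\{t\}=\{t'\}$, so the witnesses coincide. So the witness can change only at a multi-opinion vertex.

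I then split into two cases. If the cycle has no multi-opinion transactions, the witness is constant around the whole cycle, say $t$. Then every $x_i\in X_t$ and $U_t(x_1)>U_t(x_2)>\cdots>U_t(x_k)>U_t(x_1)$, a contradiction.

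If the cycle has exactly one multi-opinion transaction, relabel so it is $x_1$. Going from the edge $x_1\to x_2$ through $x_2,\dots,x_k$, all of which are single-opinion, the witness never changes. So the witness is a single $t$ for all edges, including the closing edge $x_k\to x_1$. Again $U_t$ strictly decreases around the cycle, which is a contradiction. Here it helps to use the unanimity clause rather than only the witness. For every edge, both endpoints lie in $X_t$ (the single-opinion vertices lie only in $t$ among opinionated apps, and $x_1$ lies in $X_t$ via the edge $x_1\to x_2$). The first condition of $\succ_u$ then gives $U_t(x_i)\geq U_t(x_{i+1})$ on every edge, with strictness at least on the witnessed edges. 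That suffices for the contradiction.

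The main obstacle is making the "witness can only change at a multi-opinion vertex" step airtight, including the wrap-around at $x_1$ in the second case. The point to check is that the closing edge into $x_1$ and the edge out of $x_1$ are both pinned to $t$ by their single-opinion other endpoints, $x_k$ and $x_2$. This needs $k\geq 2$, and for $k=2$ the edge is between $x_1$ and a single-opinion $x_2$ in both directions, which is also handled. I would also check that the length-2 cycle $x_1\succ_u x_2\succ_u x_1$ is excluded even when both endpoints are multi-opinion, but the lemma only needs the count. The structure of the argument is otherwise routine.
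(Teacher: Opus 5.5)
Your proof is correct and follows essentially the same idea as the paper's informal argument. The paper groups the transactions by their sole opinionated application and notes that a cycle can pass between groups only through the single multi-opinion transaction $m$; this is your observation that the witness application can change only at a multi-opinion vertex, and your edge-witness version makes the paper's ``the cycle lies in a single subset'' step explicit, including the wrap-around at $m$ (your extra remark about the weak-inequality clause is unnecessary, since every edge is already witnessed by the same $t$).
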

For an informal proof,\footnote{An informal proof suffices because the derivations below use only a weaker statement: a cycle cannot exist without multi-opinion transactions.} consider a set containing single-opinion transactions and at most one multi-opinion transaction $m$. Partition the single-opinion transactions by the opinionated application they interact with. If the multi-opinion transaction $m$ exists, add it to every subset whose application it interacts with. Within each subset, all unanimity comparisons are governed by a single transitive preference, and there are no unanimity comparisons between different subsets. A cycle could therefore move from one subset to another only through $m$, which it visits at most once; the cycle would then lie within a single subset, contradicting transitivity.

\subsection{The algorithm}
 
Unanimity override determines the execution order $\succ_e$ in two steps:
\begin{enumerate}
    \item if the unanimity relation contains cycles, a fallback rule breaks them, leaving an acyclic repaired unanimity relation $\tilde\succ_u$;
    \item the repaired unanimity relation is extended into a total execution order by Kahn's algorithm, which uses $U_d$ to select among the available transactions.
\end{enumerate}
The first step determines which application-generated comparisons survive and is the substantive part of the mechanism. The second step completes the resulting acyclic relation. The main guarantees of the next section depend only on the fact that the completion extends the repaired relation, not on the particular procedure used; the procedure matters, however, for the transactions over which applications express no preferences (Section~\ref{sec:no-opinion}).
 
\subsubsection{Fallback rule in case of cycles}\label{subsec:fallback}
 
The possibility of cycles requires introducing a fallback rule aimed at recovering an acyclic ordering. Before presenting this rule, note that cycles can overlap, so the fallback does not operate on individual cycles but on each \emph{cyclic component} $Y\subseteq X$: a maximal set of transactions whose induced unanimity graph is strongly connected and contains a directed cycle (equivalently, a non-trivial strongly connected component of $\succ_u$).  There may be several cyclic components, and the fallback is applied to each separately.
 
The building block of the fallback is \emph{demotion}. The rule selects a set $D\subseteq Y$ to demote; the remaining transactions $K=Y\setminus D$ stay ordered by the surviving unanimity comparisons.
 
\begin{defn}[Demotion]
Fix a cyclic component $Y$ and a set $D\subseteq Y$, and let $K=Y\setminus D$. Demoting $D$ means: 
\begin{itemize}
    \item  removing every comparison in $\succ_u$ between two transactions in $Y$ whenever at least one of them lies in $D$;
    \item adding $  x\mathrel{\tilde\succ_u}d$ for every $x\in K$ and $d\in D$;
\item   adding $ d_i\mathrel{\tilde\succ_u}d_j$ whenever $d_i,d_j\in D$  and $U_d(d_i)>U_d(d_j)$.
\end{itemize}
Every comparison involving a transaction outside $Y$ is left unchanged.
\end{defn}
 
Three remarks are useful. First, demotion necessarily breaks some unanimity comparisons: it removes every unanimity comparison between the demoted transaction and any other element of $Y$. Second, demotion does not just relax constraints; it also introduces new ones that the execution order must respect, even though they do not arise from application unanimity. The repaired relation is therefore a mixture of surviving unanimity comparisons and protocol-imposed comparisons. Third, transactions in $D$ are ordered by the default order rather than by unanimity. Retaining their unanimity comparisons could recreate a cycle inside $D$, which is what the fallback must avoid.
 
Different demotion-selection rules are possible, and the main results do not depend on the specific rule as long as only multi-opinion transactions are demoted. Throughout, I adopt the following rule.
 
\medskip
\noindent\textbf{Fallback rule.} Starting from $D=\emptyset$, consider the relation induced by $Y\setminus D$. While that relation contains a cycle, add to $D$ the lowest-$U_d$ multi-opinion transaction that lies on a cycle in $Y\setminus D$. This set is nonempty by Lemma~\ref{lem:two-multi-app}. Ties are broken by a fixed deterministic rule, such as the transaction hash. Stop when the relation induced by $Y\setminus D$ is acyclic. A single demotion may not suffice: several transactions may be demoted before all cycles disappear.
\medskip
 
The fallback therefore uses only $U_d$ to select which multi-opinion transaction to demote. When $U_d$ is the priority fee, the rule has a simple economic interpretation: among the transactions eligible for demotion, it selects the one that paid the least, just as a fee market would. More sophisticated rules that use the topology of $Y$ to choose the demoted transaction are also possible, but I do not explore them here.
 
\begin{lemma}\label{lem:termination}
For every cyclic component $Y$, the fallback rule terminates after finitely many demotions. After the rule is applied to every cyclic component, the repaired relation $\langle X,\tilde\succ_u\rangle$ is acyclic.
\end{lemma}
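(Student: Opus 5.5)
Termination and acyclicity follow from two separate observations: the loop on each component strictly enlarges $D$ inside a finite set, and the repaired relation admits an explicit ranking that every comparison respects.

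\emph{Termination.} Fix a cyclic component $Y$. Each iteration of the fallback rule runs only while the relation induced by $Y\setminus D$ contains a cycle. By Lemma~\ref{lem:two-multi-app}, such a cycle contains a multi-opinion transaction, so the candidate set is nonempty and the rule always has a transaction to add. The added transaction lies on a cycle in $Y\setminus D$, so it is not already in $D$, and $|D|$ strictly increases. Since $Y\subseteq X$ is finite, the loop stops after at most $|Y|$ iterations. It can only stop because the relation induced by $K=Y\setminus D$ is acyclic. In the extreme case $K$ is empty or a singleton, which is trivially acyclic.

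\emph{Acyclicity.} Let $Y_1,\dots,Y_m$ be the cyclic components, with kept sets $K_i$ and demoted sets $D_i$. I will classify every edge of $\tilde\succ_u$ and show that any directed cycle leads to a contradiction.

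Edges within a single $Y_i$ are of three kinds:
\begin{itemize}
  \item surviving unanimity edges inside $K_i$, which form an acyclic relation by the stopping condition;
  \item imposed edges $K_i\to D_i$;
  \item edges inside $D_i$ ordered by the injective $U_d$, which form a strict total order and hence are acyclic.
\end{itemize}
Every other edge coincides with an edge of $\succ_u$ that has at least one endpoint outside the component of the other endpoint. Those edges are untouched by the demotion definition.

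Now suppose $\tilde\succ_u$ has a directed cycle $C$. First, $C$ cannot use an edge between two different strongly connected components of $\succ_u$. Every inter-component edge of $\tilde\succ_u$ is an original $\succ_u$ edge, and these edges respect the acyclic condensation DAG of $\succ_u$. So if $C$ left a component, it could never return to it. Hence $C$ lies entirely inside one strongly connected component. A trivial component carries no cycle, since $\succ_u$ is irreflexive. So $C$ lies inside some $Y_i$. Within $Y_i$, no edge goes from $D_i$ to $K_i$. Therefore $C$ lies entirely in $K_i$ or entirely in $D_i$, and both of these relations are acyclic, which is a contradiction.

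The main obstacle is making rigorous the claim that edges leaving a component are unchanged and still respect the condensation order. This requires that the demotion on $Y_i$ only alters pairs with both endpoints in $Y_i$, which the definition states explicitly. It also requires that demotions applied to different components do not interact. They do not, because the components are disjoint and each demotion only modifies pairs inside its own component.
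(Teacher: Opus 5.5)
Your proof is correct and follows essentially the same route as the paper. Termination holds because Lemma~\ref{lem:two-multi-app} always supplies an undemoted candidate while $|D|$ strictly grows inside the finite set $Y$. Acyclicity holds because $K$ is acyclic by the stopping rule, $D$ is totally ordered by $U_d$, every $K$--$D$ edge points into $D$, and no cycle can cross components. Your condensation-DAG argument over all strongly connected components of $\succ_u$, trivial ones included, is a slightly more careful version of the paper's remark that a cycle spanning components would merge them into one.
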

The proof is in Appendix~\ref{app:proof-lem-termination}. The key observation is that, by Lemma~\ref{lem:two-multi-app}, a cycle cannot exist without at least two multi-opinion transactions. Therefore, whenever a residual cycle remains, the fallback has an eligible transaction to demote. Since the block is finite, the procedure eventually removes enough multi-opinion transactions to eliminate all
cycles.
 
\begin{lemma}[preservation]\label{lem:preservation}
If a transaction $x$ is not demoted by the fallback, then:
\[
  x\succ_u y \quad\Longrightarrow\quad x\mathrel{\tilde\succ_u}y.
\]
\end{lemma}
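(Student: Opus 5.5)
The plan is to go through the definition of demotion case by case. The repaired relation $\tilde\succ_u$ is built from $\succ_u$ by local surgery on each cyclic component. The only way a comparison of $\succ_u$ can disappear is the first clause of the demotion definition. That clause removes a comparison between two transactions of the same cyclic component $Y$ only when at least one of them lies in the demoted set $D$ chosen for $Y$. So fix $x$ not demoted and $y$ with $x\succ_u y$, and show that the pair $(x,y)$ is never touched by a removal.

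First case: $x$ and $y$ do not both lie in a common cyclic component. If $x$ lies in no cyclic component, then every comparison involving $x$ involves a transaction outside each $Y$, so it is left unchanged by the last clause of the definition. The same holds if $y$ lies outside the component of $x$. If $x$ and $y$ lie in different components $Y\neq Y'$, the comparison involves a transaction outside $Y$ and a transaction outside $Y'$, so neither component's demotion removes it. Here I use that cyclic components are disjoint, being nontrivial strongly connected components. In all these subcases, $x\mathrel{\tilde\succ_u}y$ survives.

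Second case: $x,y\in Y$ for the same component $Y$, with demoted set $D$, and $x\in K=Y\setminus D$. If $y\notin D$, then neither endpoint lies in $D$, the comparison is not removed, and $x\mathrel{\tilde\succ_u}y$. If $y\in D$, the original comparison is indeed removed. However, the second clause of the demotion definition adds $x\mathrel{\tilde\succ_u}d$ for every $x\in K$ and $d\in D$, so $x\mathrel{\tilde\succ_u}y$ holds anyway.

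The only delicate point, which I expect to be the main obstacle, is that the later steps do not erase or contradict a relation that survives or is added. Two issues need checking. First, demotions in different components cannot interfere, since each acts only on pairs inside its own component. Second, the iterative fallback computes the final $D$ first and then applies demotion once. I therefore read $\tilde\succ_u$ as the union of the untouched comparisons and the added ones, with nothing removed afterwards, and I will state this reading explicitly. Note that Kahn's completion is not needed here: the lemma concerns $\tilde\succ_u$, which is acyclic by Lemma~\ref{lem:termination}.
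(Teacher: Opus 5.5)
Your proposal is correct and follows the same argument as the paper. A unanimity comparison is removed only when both endpoints lie in one cyclic component and one of them is demoted; since $x$ is not demoted, this forces $y\in D$ and $x\in K$, where demotion imposes $x\mathrel{\tilde\succ_u}y$. Your finer case split over components, and your explicit reading of $\tilde\succ_u$ as the untouched comparisons plus the imposed ones, only spell out what the paper's short proof leaves implicit.
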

The proof is in Appendix~\ref{app:proof-lem-preservation}. Hence, if a transaction $x$ ranks higher than another transaction $y$ in the unanimity ranking and $x$ is not demoted, this ranking is preserved by the repaired unanimity relation. If $y$ is not demoted, the result is immediate. If $y$ is demoted, the result follows from the fact that every non-demoted transaction ranks higher than every demoted transaction.

The next proposition uses the above lemma to show how the fallback rule determines the ranking between any single-opinion transaction and the other transactions interacting with the same application.
\begin{proposition}[single-opinion transaction anchor]
\label{prop:single-opinion-anchor}
Fix an opinionated application $a_t$ and a transaction $x\in X_t$ such that $T^{\mathrm{o}}(x)=\{t\}$. For every $y\in X_t\setminus\{x\}$:
\begin{enumerate}[label=(\roman*)]
  \item if $U_t(x)>U_t(y)$, then $x\succ_u y$ and $x\mathrel{\tilde\succ_u}y$;
  \item if $U_t(y)>U_t(x)$, then $y\succ_u x$, and if $y$ is not demoted, then $y\mathrel{\tilde\succ_u}x$.
\end{enumerate}
Consequently, $\tilde\succ_u$ preserves every strict preference of $a_t$ that ranks the single-opinion transaction $x$ above another transaction. In particular, every strict comparison between two single-opinion transactions whose sole opinionated application is $a_t$ is preserved.
\end{proposition}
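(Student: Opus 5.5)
The plan is to first show that for any pair $x,y$ with $x$ single-opinion in $a_t$, the unanimity relation reduces to $a_t$'s preference, and then to transfer the relevant comparisons to $\tilde\succ_u$ using the fact that $x$ is never demoted together with Lemma~\ref{lem:preservation}.

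\emph{Step 1 (unanimity reduces to $U_t$).} Take $y\in X_t\setminus\{x\}$. Every application $a_s$ with $x,y\in X_s$ either is non-opinionated, so $U_s$ is constant on $\mathbb X_s$ and $U_s(x)=U_s(y)$, or is opinionated. In the latter case $s\in T^{\mathrm o}(x)=\{t\}$, so $s=t$. Hence the weak-inequality condition in the definition of $\succ_u$ holds trivially for every $s\neq t$. The strict condition can only be met by $a_t$, which does touch both transactions since $x,y\in X_t$. Therefore $U_t(x)>U_t(y)$ implies $x\succ_u y$, and $U_t(y)>U_t(x)$ implies $y\succ_u x$. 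This gives the unanimity halves of (i) and (ii).

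\emph{Step 2 (the single-opinion transaction is never demoted).} The fallback rule adds to $D$ only multi-opinion transactions that lie on a residual cycle. Since $|T^{\mathrm o}(x)|=1$, $x$ is never selected, whatever cyclic component it belongs to. If $x$ lies in no cyclic component, it is trivially not demoted.

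\emph{Step 3 (transfer to $\tilde\succ_u$).} For (i), Step 1 gives $x\succ_u y$ and Step 2 gives that $x$ is not demoted, so Lemma~\ref{lem:preservation} yields $x\mathrel{\tilde\succ_u}y$. For (ii), we have $y\succ_u x$, and under the hypothesis that $y$ is not demoted, Lemma~\ref{lem:preservation} applied to $y$ gives $y\mathrel{\tilde\succ_u}x$.

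\emph{Consequences.} The first consequence is exactly (i). For the second, take two single-opinion transactions $x,x'$ with sole opinionated application $a_t$ and $U_t(x)>U_t(x')$. Applying (i) with $y=x'$ gives the result. Alternatively, apply (ii) with the roles swapped: $x$ plays the part of $y$ there and is not demoted by Step 2.

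\emph{Main obstacle.} The only delicate point is Step 1: one must make sure that non-opinionated applications impose no constraint. This holds because constancy of $U_s$ on all of $\mathbb X_s$ forces equality on the pair, which satisfies the weak inequality in both directions. The rest is bookkeeping with Lemma~\ref{lem:preservation}, whose proof already handles the case in which the other transaction is demoted.
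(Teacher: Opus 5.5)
Your proof is correct and follows essentially the same route as the paper. Every application other than $a_t$ that the pair shares is non-opinionated and hence indifferent, so $\succ_u$ on the pair coincides with $a_t$'s preference; since the fallback demotes only multi-opinion transactions, Lemma~\ref{lem:preservation} then carries the comparison over to $\tilde\succ_u$, with your explicit constancy check on $\mathbb X_s$ spelling out a step the paper leaves implicit.
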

The proof is in Appendix~\ref{app:proof-prop-single-opinion-anchor}. For intuition, suppose transactions $x$ and $y$ interact with a given application. If one of the two is single-opinion, then this application is the only one that can express a preference over the ranking of $x$ and $y$, so the unanimity ranking coincides with that application's preferences. The proof then follows from Lemma~\ref{lem:preservation} and the fact that $x$ is single-opinion and hence is never demoted.

\subsubsection{Extending the repaired unanimity partial order}
 
It remains to extend the repaired unanimity relation into a total execution order. Here I use Kahn's topological-sort algorithm \citep{kahn1962topological}. Intuitively, the procedure builds the block from the top. Call a transaction \emph{available} if it has not yet been scheduled and no unscheduled transaction is ranked above it in the repaired relation. At each step, the procedure schedules the available transaction with the highest default-order value; scheduling a transaction may make new transactions available, and these compete immediately, by the default order, with those already waiting. The procedure stops when every transaction is scheduled, and the execution order $\succ_e$ ranks transactions in the order in which they were scheduled. Note that an isolated transaction --- one with no comparison in the repaired relation --- is available from the first step onward. Because a transaction is scheduled only after every transaction ranked above it, the resulting execution order extends the repaired relation: $\tilde\succ_u\subseteq\succ_e$ \citep{kahn1962topological}.

To illustrate the algorithm, let $X=\{x_1,x_2,x_3,x_4\}$ and suppose that the repaired graph contains
\[
  x_1\mathrel{\tilde\succ_u}x_2,
  \qquad
  x_1\mathrel{\tilde\succ_u}x_3,
\]
while $x_4$ is isolated. The first three transactions form one connected component, with  $x_2$ and $x_3$ incomparable. Let
\[
  U_d(x_1)>U_d(x_2)>U_d(x_4)>U_d(x_3).
\]
 
At the first step, $x_1$ and $x_4$ are available, and $x_1$ has the higher default value, so it is scheduled first. Scheduling $x_1$ makes $x_2$ and $x_3$ available, and $x_2$ now outbids the other available transactions. The schedule continues with $x_2$, then $x_4$, then $x_3$:
\[
  x_1\succ_e x_2\succ_e x_4\succ_e x_3.
\]
Note that $x_2$ is available only from the second step but executes before $x_4$. Had $U_d(x_2)>U_d(x_4)>U_d(x_1)$ held instead, $x_4$ would have been scheduled first, and $x_2$ --- despite paying more than $x_4$ --- would have executed after it, held back by its lower-fee predecessor $x_1$. Two transactions that the repaired relation leaves incomparable are therefore not, in general, ranked by the default order. The next lemma identifies when they are.

\begin{lemma}[default ranking]\label{lem:default-ranking}
Suppose no transaction is ranked above $x$ in the repaired relation $\tilde\succ_u$. Then $y\succ_e x$ only if $U_d(y)>U_d(x)$. In particular, two transactions with no predecessors in the repaired relation are always ranked by the default order.
\end{lemma}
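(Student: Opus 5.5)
The argument is a direct inspection of Kahn's algorithm. The key observation is that a transaction $x$ with no predecessor in $\tilde\succ_u$ is available from the very first step and stays available until the step at which it is scheduled. Availability is lost only through being scheduled. A transaction becomes available once every transaction ranked above it has been scheduled, and since no transaction is ranked above $x$, this condition holds from the start.

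With this in hand, suppose $y\succ_e x$, so $y$ is scheduled at some step $s$ strictly before the step at which $x$ is scheduled. At step $s$, $x$ has not yet been scheduled, so by the observation above it belongs to the set of available transactions. The procedure schedules the available transaction with the highest default value, and it chose $y$ over $x$. Hence $U_d(y)\ge U_d(x)$. Since $U_d$ is injective and $y\neq x$, this gives $U_d(y)>U_d(x)$, which is the first claim. The tie-breaking by transaction hash is needed only inside the fallback, not here, because injectivity of $U_d$ already rules out ties among available transactions.

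For the ``in particular'' statement, take two transactions $x,y$ that both have no predecessors in $\tilde\succ_u$. Apply the first claim to whichever is ranked lower by $\succ_e$: if $y\succ_e x$, then $U_d(y)>U_d(x)$, and symmetrically if $x\succ_e y$. Because $\succ_e$ is a total order on $X$, exactly one of these holds, so the execution order between them agrees with the default order.

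I expect no real obstacle here. The only point requiring care is stating precisely that availability, once acquired, is lost only by scheduling. This follows because the set of unscheduled transactions only shrinks, so the condition ``no unscheduled transaction is ranked above it'' is monotone in the step. I would make that monotonicity explicit in a single sentence rather than leave it implicit.
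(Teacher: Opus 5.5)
Your proposal is correct and matches the paper's own argument: $x$ is available from the first step of Kahn's algorithm and stays available until it is scheduled, so any transaction scheduled before it was chosen over it and has a higher default value. The ``in particular'' clause follows by applying this in both directions, and your explicit remarks on the monotonicity of availability and on the injectivity of $U_d$ ruling out ties add rigor that the paper leaves implicit.
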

The proof is immediate: $x$ is available from the first step of the algorithm and remains available until it is scheduled, so any transaction scheduled before $x$ was selected while $x$ was available, and therefore has a higher default value. If $y$ also has no predecessor, the same argument applies in both directions, so the pair executes in default order.
 
 For future reference, note that the main guarantees of Section~\ref{sec:threat-model} --- on which unanimous preferences are respected --- do not depend on how the repaired unanimity relation is extended. Section~\ref{sec:no-opinion}, however, discusses whether an attacker can front-run an isolated transaction, one over which applications express no preferences. The results derived there depend on the Kahn algorithm.
 
\subsection{Examples}
 Suppose there are two applications and six transactions, with
\[
  X_1=\{x_1,x_2,x_5,x_6\} \qquad X_2=\{x_3, x_4, x_5, x_6\}.
\]
So $x_5$ and $x_6$ interact with both applications, while each of the others interacts with only one.
 
\medskip
\noindent\emph{Example 1:} assume $U_1(x_1)>U_1(x_5)>U_1(x_6)>U_1(x_2)$ and $U_2(x_3)>U_2(x_6)>U_2(x_5)>U_2(x_4)$. Unanimity forces $x_1$ and $x_3$ before $x_5$ and $x_6$, and both before $x_2$ and $x_4$. Figure~\ref{fig:example1} shows the layered unanimity graph.

The algorithm schedules $x_1$ and $x_3$ first, then $x_5$ and $x_6$, then $x_2$ and $x_4$; within each pair, the default order decides. Note that no application ranks the pairs $(x_1, x_3)$ and $(x_2, x_4)$, so for them the default order acts as a tie-breaking rule. For the pair $(x_5, x_6)$, however, the applications have opposing preferences, so how the default order ranks these two transactions matters to them. 
 
\begin{figure}[ht]
  \centering
  \begin{tikzpicture}
    \node[txn] (x1) at (-1.8,3.2) {$x_1$};
    \node[txn] (x3) at (1.8,3.2) {$x_3$};
    \node[txn, multi] (x5) at (-1.8,1.6) {$x_5$};
    \node[txn, multi] (x6) at (1.8,1.6) {$x_6$};
    \node[txn] (x2) at (-1.8,0) {$x_2$};
    \node[txn] (x4) at (1.8,0) {$x_4$};
    \draw[aone] (x1) -- (x5);
    \draw[aone] (x1) -- (x6);
    \draw[atwo] (x3) -- (x5);
    \draw[atwo] (x3) -- (x6);
    \draw[aone] (x5) -- (x2);
    \draw[aone] (x6) -- (x2);
    \draw[atwo] (x5) -- (x4);
    \draw[atwo] (x6) -- (x4);
    \draw[unranked] (x5) -- node[lab] {disagree} (x6);
  \end{tikzpicture}
  \caption{Example 1: the unanimity graph (transitively implied comparisons omitted) is layered: $\{x_1, x_3\}$ are executed before $\{x_5, x_6\}$, which are executed before $\{x_2, x_4\}$. Within each layer the pair is unranked and decided by the default; on $(x_5, x_6)$ the applications genuinely disagree, so either default choice violates one application's preference.}
  \label{fig:example1}
\end{figure}

\medskip
\noindent\emph{Example 2:} preferences are $U_1(x_1)>U_1(x_5)>U_1(x_2)>U_1(x_6)$ and $U_2(x_6)>U_2(x_3)>U_2(x_5)>U_2(x_4)$, so the applications disagree over the ordering of $x_5$ and $x_6$. The unanimity relation contains the cycle $x_5 \succ_u x_2 \succ_u x_6 \succ_u x_3 \succ_u x_5$; the cyclic component is $Y = \{x_5, x_2, x_6, x_3\}$, whose multi-opinion transactions are $x_5$ and $x_6$ --- confirming Lemma~\ref{lem:two-multi-app}. Figure~\ref{fig:example2}(a) shows the cyclic component.
 
\emph{Fallback}: suppose $U_d(x_6) < U_d(x_5)$. The first round selects $x_6$; the relation restricted to $\{x_5, x_2, x_3\}$ is cycle-free, so the selection stops with $D = \{x_6\}$ and $K = \{x_5, x_2, x_3\}$. Demoting $x_6$ removes $x_6 \succ_u x_3$ and places $x_6$ below $x_5$, $x_2$, and $x_3$. As a by-product, the imposed comparison $x_5 \mathrel{\tilde \succ_u} x_6$ settles the pair on which the applications disagreed, in application 1's favor. Had $U_d$ instead ranked $x_5$ lowest, $x_5$ would have been demoted and the disputed pair settled in application 2's favor. Figure~\ref{fig:example2}(b) shows the repaired relation. The role of the Kahn procedure is simply to rank $x_1$ and $x_3$ according to the default order.
\begin{figure}[ht]
  \centering
  \begin{tikzpicture}
    \node[txn] (x1) at (0,3.4) {$x_1$};
    \node[txn, multi] (x5) at (-1.2,1.9) {$x_5$};
    \node[txn] (x3) at (1.2,1.9) {$x_3$};
    \node[txn] (x2) at (-1.2,0.3) {$x_2$};
    \node[txn, multi] (x6) at (1.2,0.3) {$x_6$};
    \node[txn] (x4) at (0,-1.2) {$x_4$};
    \draw[aone] (x1) -- (x5);
    \draw[cyc] (x5) -- (x2);
    \draw[cyc] (x2) -- (x6);
    \draw[cyc] (x6) -- (x3);
    \draw[cyc] (x3) -- (x5);
    \draw[atwo] (x6) -- (x4);
    \begin{scope}[on background layer]
      \node[grp, fit=(x5)(x2)(x6)(x3),
            label={[font=\scriptsize, gray]left:{$Y$}}] {};
    \end{scope}
    \node[font=\scriptsize] at (0,-2.3) {(a) unanimity relation $\succ_u$};
  \end{tikzpicture}
  \hspace{1.6cm}
  \begin{tikzpicture}
    \node[txn] (x1) at (-0.9,3.4) {$x_1$};
    \node[txn] (x3) at (0.9,3.4) {$x_3$};
    \node[txn, multi] (x5) at (0,2.2) {$x_5$};
    \node[txn] (x2) at (0,1.0) {$x_2$};
    \node[txn, multi] (x6) at (0,-0.2) {$x_6$};
    \node[txn] (x4) at (0,-1.4) {$x_4$};
    \draw[aone] (x1) -- (x5);
    \draw[atwo] (x3) -- (x5);
    \draw[aone] (x5) -- (x2);
    \draw[imposed] (x2) -- (x6);
    \draw[imposed] (x5) to[bend left=42] (x6);
    \draw[imposed] (x3) to[bend left=55] (x6);
    \draw[atwo] (x6) -- (x4);
    \node[font=\scriptsize] at (0,-2.3) {(b) repaired relation $\tilde\succ_u$};
  \end{tikzpicture}
 \if0 \hspace{1.2cm}
  \begin{tikzpicture}
    \node[txn] (x1) at (-0.7,3.4) {$x_1$};
    \node[txn] (x3) at (0.7,3.4) {$x_3$};
    \node[txn, multi] (x5) at (0,2.1) {$x_5$};
    \node[txn] (x2) at (0,0.8) {$x_2$};
    \node[txn, multi] (x6) at (0,-0.5) {$x_6$};
    \node[txn] (x4) at (0,-1.8) {$x_4$};
    \draw[uedge] (x1) -- (x5);
    \draw[uedge] (x3) -- (x5);
    \draw[uedge] (x5) -- (x2);
    \draw[uedge] (x2) -- (x6);
    \draw[uedge] (x6) -- (x4);
    \draw[unranked] (x1) -- node[lab] {default} (x3);
    \node[font=\scriptsize] at (0,-2.9) {(c) execution order $\succ_e$};
  \end{tikzpicture}\fi
  \caption{Example 2. (a) The unanimity relation contains the cycle $x_5 \succ_u x_2 \succ_u x_6 \succ_u x_3 \succ_u x_5$ (red); the cyclic component is $Y = \{x_5, x_2, x_6, x_3\}$ and its multi-opinion transactions (thick circles) are $x_5$ and $x_6$ (transitively implied comparisons are omitted). (b) With $U_d(x_6) < U_d(x_5)$ the fallback demotes $D = \{x_6\}$: every comparison between $x_6$ and the rest of $Y$ is removed, and the dashed comparisons placing $x_6$ below $K = \{x_5, x_2, x_3\}$ are imposed.
  }
  \label{fig:example2}
\end{figure}
\section{Threat model and guarantees}\label{sec:threat-model}

I now specify the threat model. The attacker can deploy one or more applications, choosing each application's preferences and interaction domain; craft transactions, choosing which applications they validly interact with and how each application ranks them; and, in the worst-case environment, choose the default order.\footnote{
Later I specialize the default order to priority-fee ordering, in which case the attacker can change the default-order position only of transactions whose fee it controls.} The attacker cannot modify a transaction submitted by another player, change any of its interactions, or remove it from the block. This is the strong form of censorship resistance assumed in the introduction. Nor can the attacker change the preferences or interaction domain of an application already present in the baseline; it may only deploy additional applications.

Not every interaction is available to the attacker: it cannot interact from
an address it does not control, and an application may accept some
interactions only from designated senders. Let $R\subseteq\mathbb X$ be the
set of transactions the attacker can craft. Say that a transaction $x'$
\emph{extends} $x$ if every application that $x$ interacts with also ranks
$x'$, and ranks it identically: $x\in\mathbb X_t$ implies $x'\in\mathbb X_t$
and $U_t(x')=U_t(x)$. An extension may interact with further applications;
every transaction is a (trivial) extension of itself. I impose two 
assumptions on $R$:
\begin{enumerate}[label=(R\arabic*)]
  \item \emph{(single-opinion decomposition)} every transaction in $R$ can be decomposed into equivalent single-opinion transactions, themselves elements of $R$. Formally, for every $z\in R$ and every $t\in
  T^{\mathrm o}(z)$, $R$ contains a transaction $z_t$ with $T^{\mathrm
  o}(z_t)=\{t\}$ and $U_t(z_t)=U_t(z)$.
  \item \emph{(bundling)} two transactions in $R$ can be bundled into a single transaction that is equivalent to the two, and this single transaction is also an element of $R$. Formally, for all $z,w\in R$ interacting with disjoint sets
  of applications, $R$ contains an extension of $z$ that every application
  interacting with $w$ ranks as it ranks $w$.
\end{enumerate}
Both assumptions reflect how transactions are built in practice: a
transaction is a bundle of interactions, and the attacker can compose any
interactions within its reach --- keeping one, dropping the others, or
combining two bundles into one. Finally, a newly deployed application's
domain contains only transactions that interact with it: the attacker cannot
place an honest transaction in the domain of an application it deploys.

\paragraph{Baseline profile and replacement.}
Fix a baseline profile: a block $X$, the applications and their preferences, the interaction sets, and the default order. Let $\succ_u$, $\tilde\succ_u$, and $\succ_e$ denote the corresponding unanimity relation, repaired unanimity relation, and execution order. The baseline block may already contain attacker-owned transactions: let $C\subseteq X\cap R$ be the set of baseline transactions the attacker controls.

The attacker cannot modify the baseline transactions of other players, but it
can modify  its own. An attacker strategy $\sigma$ may replace any
transaction in $C$ by a fresh extension of it in $R$, and may inject a set
$Z^\sigma\subseteq R$ of additional fresh transactions; fresh means distinct
from every other transaction in the attacked block. 
Write $\kappa^\sigma(x)$ --- abbreviated $\bar x$ when the strategy is clear
from context --- for the transaction that stands in for a baseline
transaction $x$ in the attacked block: its replacement if it is replaced,
and $x$ itself otherwise. In either case $\kappa^\sigma(x)$ is an extension
of $x$ --- a fresh one, or the trivial one. The attacked block and the attacker-controlled transactions in it are
\[
  X^\sigma=(X \setminus C)\cup C^\sigma,
  \qquad
  C^\sigma=\kappa^\sigma(C)\cup Z^\sigma.
\]
The attacker may additionally deploy applications and, in the worst-case environment, choose the default order; let $\succ_u^\sigma$, $\tilde\succ_u^\sigma$, and $\succ_e^\sigma$ denote the resulting relations. 

\begin{defn}[attack]
Fix a baseline profile and its controlled set $C$. A \emph{target pair} is a pair of distinct transactions $(x_i,x_j)$ in $X$ such that
\[
  x_i\succ_u x_j
  \qquad\text{and}\qquad
  x_i\mathrel{\tilde\succ_u}x_j.
\]
Thus the applications unanimously rank $x_i$ above $x_j$, and the baseline fallback preserves this comparison. A feasible strategy $\sigma$ is a \emph{successful attack} on $(x_i,x_j)$ if
\[
  \kappa^\sigma(x_j)\succ_e^\sigma x_i.
\]
\end{defn}

The success condition is asymmetric: it tracks the stand-in $\kappa^\sigma(x_j)$ but the original $x_i$, which embeds the idea that $x_i$ is not controlled by the attacker.  If $x_j\in C$, a successful attack is a \emph{front-running attack}: the attacker advances the extension of its own lower-ranked baseline transaction. If $x_j\notin C$, then $ \kappa^\sigma(x_j)=x_j$, and a successful attack is \emph{griefing}.

\subsection{How attacks can be carried out}
The definition admits two distinct ways of defeating a protected unanimity
comparison: adding unanimity comparisons to create a cycle and induce
demotion, or removing unanimity comparisons by extending $x_j$ so that
applications conflict over the attacked pair, and then relying on the
completion step. I discuss each channel in turn.

\paragraph{First channel: cycle creation and demotion.}
Suppose that the attacked profile still contains the unanimous comparison
\[
  x_i\succ_u^\sigma\bar x_j.
\]
Kahn's algorithm extends the repaired relation:
\[
  \tilde\succ_u^\sigma\subseteq\succ_e^\sigma.
\]
Hence the completion step cannot reverse this comparison if it survives the fallback. A successful attack through this channel must place $x_i$ and $\bar x_j$ in the same cyclic component and cause the fallback to remove the comparison by demoting $x_i$. The attack therefore proceeds in two steps:
\begin{enumerate}
  \item \emph{cycle creation}: embed the target pair in a cyclic component by injecting transactions positioned within the rankings of honest applications and, possibly, deploying an application to close the cycle;
  \item \emph{demotion targeting}: make the fallback demote $x_i$ and, if $\bar x_j$ is also demoted, make the default order rank $\bar x_j$ above $x_i$.
\end{enumerate}

As an illustrative example, consider two honest applications $a$ and $b$ that both rank $x_i$ above $x_j$, so $x_i\succ_u x_j$. The attacker leaves the target transactions unchanged, deploys one application $M$, and injects two transactions: $y_1$, interacting with $a$ and $M$, whose interaction with $a$ satisfies $U_a(x_j)>U_a(y_1)$; and $y_2$, interacting with $b$ and $M$, whose interaction with $b$ satisfies $U_b(y_2)>U_b(x_i)$. Since $M$ is the only application ranking $(y_1,y_2)$, the attacker chooses its interactions with $M$ such that $y_1\succ_u^\sigma y_2$. The resulting cycle is
\[
  x_i
  \succ_u^\sigma x_j
  \succ_u^\sigma y_1
  \succ_u^\sigma y_2
  \succ_u^\sigma x_i.
\]
Figure~\ref{fig:attack} shows the cycle and the demotion of $x_i$ when the default order ranks it lowest. 

To conclude, note that this example requires three things. The attacker must
be able to craft a transaction that some application ranks strictly below
the pair's lower transaction --- here, $y_1$ below $x_j$ --- and a
transaction that another application ranks strictly above the target
transaction --- here, $y_2$ above $x_i$. And the target transaction must be
multi-opinion, and therefore eligible for demotion. The next subsection
shows that these conditions are at the core of the protocol's guarantees:
when the attacker cannot reach above the target transaction or below the
pair's lower transaction, or when the target transaction is single-opinion,
this attack is impossible. 

\begin{figure}[ht]
  \centering
  \begin{tikzpicture}
    \useasboundingbox (-2.5,-2.9) rectangle (2.5,2.7);
    \node[txn, multi] (xi) at (90:1.7) {$x_i$};
    \node[txn, multi] (xj) at (0:1.7) {$x_j$};
    \draw[aboth] (xi) to[bend right=14] node[lab] {$a,b$} (xj);
    \node[font=\scriptsize] at (0,-2.6) {(a) baseline: $x_i\succ_u x_j$};
  \end{tikzpicture}
  \hspace{1.2cm}
  \begin{tikzpicture}
    \useasboundingbox (-2.5,-2.9) rectangle (2.5,2.7);
    \node[txn, multi] (xi) at (90:1.7) {$x_i$};
    \node[txn, multi] (xj) at (0:1.7) {$x_j$};
    \draw[aboth] (xi) to[bend right=14] node[lab] {$a,b$} (xj);
    \node[txn, multi] (y1) at (-90:1.7) {$y_1$};
    \node[txn, multi] (y2) at (180:1.7) {$y_2$};
    \draw[cyc] (xj) to[bend right=14] node[lab] {$a$} (y1);
    \draw[cyc] (y1) to[bend right=14] node[lab] {$M$} (y2);
    \draw[cyc] (y2) to[bend right=14] node[lab] {$b$} (xi);
    \node[txn, multi, draw=red!80!black, line width=1.1pt] at (90:1.7) {$x_i$};
    \node[font=\scriptsize\bfseries, red!75!black, anchor=south] at (90:2.35) {demoted};
    \node[font=\scriptsize] at (0,-2.6) {(b) attacked cycle};
  \end{tikzpicture}
  \hspace{1.0cm}
  \begin{tikzpicture}
    \useasboundingbox (-1.2,-2.9) rectangle (2.2,2.7);
    \node[txn, multi] (xj) at (0,2.0) {$x_j$};
    \node[txn, multi] (y1) at (0,0.8) {$y_1$};
    \node[txn, multi] (y2) at (0,-0.4) {$y_2$};
    \node[txn, multi, draw=red!80!black, line width=1.1pt] (xi) at (0,-1.6) {$x_i$};
    \draw[uedge] (xj) -- (y1);
    \draw[uedge] (y1) -- (y2);
    \draw[imposed] (y2) -- (xi);
    \node[font=\scriptsize\bfseries, red!75!black, anchor=west] at (0.5,-1.6) {demoted};
    \node[font=\scriptsize] at (0,-2.6) {(c) execution order};
  \end{tikzpicture}
  \caption{The cycle-and-demotion attack. The attacker preserves the target transactions, creates a cycle around the baseline edge, and makes the fallback demote $x_i$. Kahn then extends a repaired relation that places $x_j$ above $x_i$.}
  \label{fig:attack}
\end{figure}

\paragraph{Second channel: replacement and Kahn completion.}
Consider again a target pair $(x_i,x_j)$. Now assume that $x_j\in C$ and the attacker replaces it, so that
$\bar x_j=\kappa^\sigma(x_j)$ is an extension of $x_j$. By definition of extension, every baseline application shared by $x_i$ and $x_j$
retains its comparison, including at least one strict preference for $x_i$. However, the replacement may succeed in destroying the baseline comparison by adding
an opposing interaction with another application that $x_i$ interacts with. 

For example, suppose that the baseline block is $X=\{x,y\}$, with $C=\{y\}$. There are two honest opinionated applications, $a$ and $b$. Transaction $x$ interacts with both applications, while $y$ interacts only with $a$, and
\[
  U_a(x)>U_a(y).
\]
The only application shared by the baseline pair is $a$, so
\[
  x\succ_u y.
\]

Suppose the attacker's reach contains a transaction $w$ with
$U_b(w)>U_b(x)$. By bundling, $R$ contains an extension of $y$ that $b$
ranks as it ranks $w$; let the attacker replace $y$ with it, so that this
extension is $\bar y=\kappa^\sigma(y)$, with
\[
  U_a(\bar y)=U_a(y)<U_a(x)
\]
and
\[
  U_b(\bar y)>U_b(x).
\]
The applications now disagree over $(x,\bar y)$, so there is no unanimity edge in either direction. Since the attacked block is $\{x,\bar y\}$, both transactions are isolated. If
\[
  U_d(\bar y)>U_d(x),
\]
Kahn returns
\[
  \bar y\succ_e^\sigma x.
\]
This is a successful attack on the protected baseline pair $(x,y)$.

Note that the same three ingredients discussed in the context of the other attack appear also here. The attack needs the target transaction to be
multi-opinion; it needs a transaction ranked below it at one
application (here $\bar y$); and it needs a reachable
transaction ranked above it at another application (here the
interaction with $b$ that the extension adds). Again, when these conditions are not satisfied, this attack is also infeasible.

\subsection{Unanimity override's guarantees}
The two channels fail under identifiable conditions, and those conditions
are the protocol's guarantees. First, if no transaction the attacker can
craft ranks strictly above the target transaction at any of its
applications, neither channel can supply the ``above'' transaction: the
target transaction is \emph{gated}, and the attack always fails. Second, a
symmetric guarantee holds at the bottom of the order: a transaction that is
ranked at the bottom by all applications cannot be attacked. Third, if the
target transaction is single-opinion, it can neither be demoted nor be made
to conflict across applications: its unanimity comparisons survive intact.
The next propositions state these guarantees formally.
Section~\ref{sec:impossibility} shows that these guarantees cannot be
improved upon.

\begin{proposition}[gated region]\label{thm:protected-region}
Let $X^*\subseteq X$ be the largest set such that, under every feasible
attacker strategy $\sigma$:
\begin{itemize}
  \item \emph{(unreachable)} no transaction in $X^\sigma\setminus X^*$ is
  unanimously ranked above a member of $X^*$; and
  \item \emph{(acyclic)} the restriction of $\succ_u^\sigma$ to $X^*$ is
  acyclic.
\end{itemize}
Then, under every feasible strategy $\sigma$, for every $x^*\in X^*$ and
every $z\in X$ with $x^*\succ_u z$, we have
$x^*\succ_e^\sigma\kappa^\sigma(z)$. Consequently, no target pair whose
higher-ranked transaction is gated can be attacked, through either channel,
whatever the default order.
\end{proposition}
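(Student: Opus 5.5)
The plan is to prove, for an arbitrary feasible strategy $\sigma$, three things in turn: (a) $x^*\succ_u^\sigma\kappa^\sigma(z)$ in the attacked profile; (b) $x^*$ lies in no cyclic component of $\succ_u^\sigma$, so the fallback never demotes it; (c) the conclusion then follows from Lemma~\ref{lem:preservation} and the fact that Kahn's algorithm extends the repaired relation. Throughout I take $x^*$ to be present in the attacked block. It is either honest, or a controlled transaction left unreplaced. If a gated controlled transaction were replaced, the statement would be read for its stand-in, and I would flag this as a convention to make explicit.

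\emph{Step (a).} If $\kappa^\sigma(z)=z$, I use two facts. Newly deployed applications cannot contain honest transactions, and baseline applications keep their preferences and domains. So the applications that rank both $x^*$ and $z$ are the same as in the baseline, and rank them the same way. Hence $x^*\succ_u z$ carries over to $x^*\succ_u^\sigma z$.

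If $z\in C$ is replaced, $\bar z=\kappa^\sigma(z)$ extends $z$. Every baseline application shared with $x^*$ therefore still ranks $x^*$ weakly, and at least one strictly, above $\bar z$. The danger is an application $t$, added by the extension or newly deployed, with $U_t(\bar z)>U_t(x^*)$; this is exactly the second channel. I plan to rule it out by changing strategies, using the fact that $X^*$ is defined against \emph{every} feasible strategy.

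Since $\bar z\in R$, assumption (R1) gives a single-opinion $z_t\in R$ with $T^{\mathrm o}(z_t)=\{t\}$ and $U_t(z_t)=U_t(\bar z)>U_t(x^*)$. Consider the strategy $\sigma'$ that injects a fresh copy of $z_t$. The only opinionated application that can rank the pair $(z_t,x^*)$ is $t$; non-opinionated applications are constant and impose nothing. Hence $z_t\succ_u^{\sigma'}x^*$. Because $z_t$ is fresh, it is not in $X\supseteq X^*$, which contradicts the unreachability of $X^*$ under $\sigma'$. So no such $t$ exists, and $x^*\succ_u^\sigma\bar z$. One detail to check: if $t$ is a newly deployed application, $\sigma'$ must deploy it too, which is allowed.

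\emph{Step (b).} Suppose $x^*$ lay on a directed cycle of $\succ_u^\sigma$. Its predecessor on the cycle is unanimously ranked above $x^*$, so by unreachability it belongs to $X^*$. Walking backwards around the cycle, every vertex lies in $X^*$. The cycle is then contained in the restriction of $\succ_u^\sigma$ to $X^*$, contradicting the acyclicity condition.

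So $x^*$ belongs to no non-trivial strongly connected component. The fallback acts only within cyclic components, so $x^*$ is never demoted.

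\emph{Step (c).} By Lemma~\ref{lem:preservation} applied in the attacked profile, $x^*\succ_u^\sigma\bar z$ implies $x^*\mathrel{\tilde\succ_u^\sigma}\bar z$. Since $\tilde\succ_u^\sigma\subseteq\succ_e^\sigma$, we get $x^*\succ_e^\sigma\bar z$. This holds whatever the default order, because $U_d$ enters only through the choice of demoted transactions and Kahn tie-breaking, neither of which touches $x^*$ or this edge. The "consequently" clause is then immediate: a successful attack on a target pair $(x^*,z)$ requires exactly $\kappa^\sigma(z)\succ_e^\sigma x^*$.

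I expect step (a) in the replacement case to be the main obstacle. It is where the second channel must be closed, and it requires a detour through an auxiliary strategy built with (R1), rather than reasoning within $\sigma$ alone.
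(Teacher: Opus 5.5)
Your proposal is correct and follows essentially the same route as the paper. Like the paper, you show that no member of $X^*$ can be demoted by walking a cycle backward through unreachability, you close the replacement channel with the same auxiliary strategy that injects a fresh (R1) decomposition $z_t$, and you finish with Lemma~\ref{lem:preservation} and the fact that Kahn's algorithm extends the repaired relation; the differences are presentational (the paper proves non-demotion first and states the (R1) step as a standalone claim that $U_t(x^*)\geq U_t(w)$ for every $w\in R$ sharing an opinionated application $t$ with $x^*$), and your caveat about a replaced gated controlled transaction is left implicit in the paper, which reads the claim only for $x^*\in X^\sigma$.
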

The proof is in Appendix~\ref{app:proof-prop-top-transaction}. The set $X^*$ is well-defined because both defining conditions are preserved under arbitrary unions.\footnote{If a transaction outside a union $\bigcup_\alpha S_\alpha$ were unanimously ranked above a member, it would be ranked above a member of some $S_\alpha$, violating that set's unreachability. For acyclicity, choose a member of any purported cycle in some $S_\alpha$ and follow the cycle backward. Unreachability of $S_\alpha$ forces every predecessor, and hence the entire cycle, to lie in $S_\alpha$, contradicting its acyclicity.} 

The proposition covers two motivating examples from the introduction. A
PropAMM accepts oracle updates only from a designated address and ranks the
update above every swap: no transaction the attacker can craft ranks above
it, so it is gated. The same argument applies to a cancellation spanning
several markets: if every market ranks cancellations first and accepts them
only from the order's owner, no reachable transaction outranks the
cancellation at any market, and it is gated.

An identical guarantee holds symmetrically for bottom-gated transactions, that is, transactions the applications unanimously rank last among all possible transactions.
\begin{proposition}[bottom-gated region]\label{prop:bottom-gated}
Let $X_*\subseteq X$ be the largest set such that, under every feasible
attacker strategy $\sigma$: (i) no member of $X_*$ is unanimously ranked above
any transaction in $X^\sigma\setminus X_*$, and (ii) the restriction of
$\succ_u^\sigma$ to $X_*$ is acyclic. Then, under every feasible strategy
$\sigma$, for every $x_*\in X_*$ and every $z\in X$ with
$z\succ_u x_*$, we have $\kappa^\sigma(z)\succ_e^\sigma x_*$.
\end{proposition}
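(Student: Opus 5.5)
The plan is to mirror the argument for Proposition~\ref{thm:protected-region}, reversing every comparison. Fix a feasible strategy $\sigma$, a member $x_*\in X_*$, and a baseline $z\in X$ with $z\succ_u x_*$. Write $\bar z=\kappa^\sigma(z)$.

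\textbf{Step 1: the unanimity edge survives replacement.} I first show $\bar z\succ_u^\sigma x_*$.

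If $z\notin C$, then $\bar z=z$. Since $x_*\notin C$ as well, the only way the comparison could change is for a newly deployed application to contain both transactions. That is impossible: an honest transaction cannot enter an attacker-deployed domain.

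If $z\in C$, then $\bar z$ is an extension of $z$. Every baseline application containing both $z$ and $x_*$ still ranks them identically, so the strict support and the weak inequalities remain. Any further application the extension adds that also contains $x_*$ is an honest, pre-existing application. If such an application ranked $x_*$ weakly above $\bar z$ but not equally, it would unanimously rank $x_*$ above... Here I must be careful. Condition (i) says no member of $X_*$ is unanimously ranked above a transaction outside $X_*$. The cleaner route is a case split. If $\bar z\notin X_*$, any destroyed comparison means some application strictly prefers $x_*$ to $\bar z$. Then I use bundling and decomposition (R1)--(R2) to build a crafted transaction consisting of that single interaction, which is unanimously ranked below $x_*$, contradicting (i). If $\bar z\in X_*$, the comparison only matters inside $X_*$, and I instead rely on acyclicity. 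I expect this bookkeeping to be the main obstacle. It is exactly the step where the top-gated proof uses unreachability to rule out the second channel, and the symmetric version must rule out the extension adding an interaction that ranks $x_*$ higher.

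\textbf{Step 2: $x_*$ lies on no cycle of $\succ_u^\sigma$.} Suppose a cycle passes through $x_*$. Follow it forward from $x_*$. By (i), every successor of a member of $X_*$ lies in $X_*$, so the whole cycle lies in $X_*$. This contradicts (ii). Hence $x_*$ belongs to no cyclic component, is never demoted, and all of its comparisons in $\succ_u^\sigma$ are untouched by the fallback.

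\textbf{Step 3: the edge survives the fallback.} By Step 2, $x_*$ is in no cyclic component. Demotion changes only comparisons between two members of the same component $Y$, so the edge $\bar z\succ_u^\sigma x_*$ is not removed. (Lemma~\ref{lem:preservation} covers the case where $\bar z$ itself is undemoted; here the edge survives regardless, because its endpoint $x_*$ lies outside every $Y$.) Thus $\bar z\mathrel{\tilde\succ_u^\sigma}x_*$.

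\textbf{Step 4: conclusion.} Kahn's algorithm extends the repaired relation, $\tilde\succ_u^\sigma\subseteq\succ_e^\sigma$, which gives $\bar z\succ_e^\sigma x_*$. This conclusion is independent of the default order.
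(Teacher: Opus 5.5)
Your proposal is essentially the paper's proof in a different order: your Step~2 is its Step~1, your Step~1 is its Steps~3--4 (decompose the offending interaction via (R1), which is all you need since bundling plays no role, and inject it to contradict (i)), and your Steps~3--4 are its Step~2. The only loose end is the case split in Step~1, which you can drop. The injected single-interaction transaction is fresh, hence outside $X_*$ whatever $\bar z$ is, so the contradiction with (i) works uniformly; note that it holds under the auxiliary strategy that injects that transaction, not under $\sigma$. Conversely, $\bar z\in X_*\subseteq X$ can only occur when $z$ is not replaced, since replacements are fresh and lie outside $X$. In that case the edge persists trivially, and ``rely on acyclicity'' is neither needed nor by itself an argument. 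Likewise, you do not need $x_*\notin C$, only that $x_*$ appears unchanged in $X^\sigma$, which the conclusion presupposes.
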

The proof is in Appendix~\ref{app:proof-prop-bottom-gated}. A settlement
transaction that every application ranks last is the natural example: under
the validity convention, no unanimity comparison points out of it, so it
executes after everything the applications rank above it.

The third guarantee requires no gating at all: a single-opinion transaction
executes according to the preferences of its application, whatever the
attacker does.

\begin{proposition}[single-opinion guarantee]\label{thm:single-opinion}
Let $x_i\in X$ satisfy $T^{\mathrm o}(x_i)=\{t\}$. Under every feasible strategy $\sigma$, for every $z\in X^\sigma$:
\[
  U_t(x_i)>U_t(z)
  \quad\Longrightarrow\quad
  x_i\succ_e^\sigma z.
\]
In particular, all single-opinion transactions are executed according to the preferences of the application each interacts with.
\end{proposition}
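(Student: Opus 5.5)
The plan is to reduce the statement to Proposition~\ref{prop:single-opinion-anchor}, applied to the \emph{attacked} profile rather than the baseline one, and then use the fact that Kahn's completion extends the repaired relation. The anchor proposition and Lemma~\ref{lem:preservation} are structural. They hold for any block, any collection of applications and any default order. So the work consists of checking that their hypotheses still hold after an arbitrary feasible strategy $\sigma$.

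The first step is to show that $x_i$ is still a single-opinion transaction of $a_t$ in the attacked profile. I read the statement as concerning the transaction $x_i$ itself, so $x_i\in X^\sigma$. This is automatic if $x_i\notin C$. If $x_i\in C$, I interpret the statement for the case where the attacker leaves $x_i$ in place. A nontrivial extension could add opinionated interactions, and then the claim would be about a different transaction.

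With that settled, I will argue as follows.
\begin{itemize}
  \item The attacker cannot change the preferences or interaction domain of any baseline application. Hence $a_t$ is still opinionated, and every baseline application other than $t$ that $x_i$ touches still has a constant $U$.
  \item A newly deployed application's domain contains only transactions that interact with it. The honest $x_i$ was built without reference to that application, so it does not lie in that domain.
\end{itemize}
Hence $T^{\mathrm o}(x_i)=\{t\}$ also in the attacked profile.

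The second step is to fix $z\in X^\sigma$ with $U_t(x_i)>U_t(z)$. Since $U_t$ is only defined on $\mathbb X_t$, this means $z\in\mathbb X_t$, and since $z\in X^\sigma$, $z$ lies in the attacked interaction domain of $a_t$. Then I apply part~(i) of Proposition~\ref{prop:single-opinion-anchor} to the attacked profile, which gives $x_i\succ_u^\sigma z$ and $x_i\mathrel{\tilde\succ_u^\sigma}z$.

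For the reader's benefit I will recall why (i) holds. Every application containing both $x_i$ and $z$ is either $a_t$, which strictly prefers $x_i$, or non-opinionated, which is weakly indifferent; so the unanimity edge exists. Moreover, $x_i$ is single-opinion, and the fallback demotes only multi-opinion transactions, so $x_i$ is never demoted and Lemma~\ref{lem:preservation} transfers the edge to $\tilde\succ_u^\sigma$. Note that $z$ may be attacker-crafted, multi-opinion, and even demoted. This does not matter: if $z$ is demoted, the imposed comparisons place every non-demoted member of the component, including $x_i$, above it.

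The final step is the completion. Kahn's algorithm satisfies $\tilde\succ_u^\sigma\subseteq\succ_e^\sigma$, so $x_i\succ_e^\sigma z$ whatever the default order. The "in particular" clause follows by applying this to every pair of single-opinion transactions sharing the same opinionated application.

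I expect the main obstacle to be the first step, namely justifying that the attacker cannot make $x_i$ multi-opinion. This hinges on how the threat model's restrictions are read. New applications cannot enroll honest transactions, baseline preferences are frozen, and replacement applies only to controlled transactions. I would state explicitly that these restrictions are exactly what keeps $T^{\mathrm o}(x_i)$ invariant, and handle the case $x_i\in C$ as described above.
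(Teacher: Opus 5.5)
Your proposal is correct and follows essentially the same route as the paper. You restrict to strategies with $x_i\in X^\sigma$, show that $x_i$ remains single-opinion at $a_t$ in the attacked profile, apply Proposition~\ref{prop:single-opinion-anchor} to the attacked profile, and conclude because Kahn's completion extends $\tilde\succ_u^\sigma$.

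One small correction to your justification: when $x_i\in C$ is left in place it is attacker-owned, not honest. It stays out of newly deployed applications' domains because a baseline transaction appears in the attacked block unchanged and keeps its baseline interactions, which is how the paper phrases it.
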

The proof is in Appendix~\ref{app:proof-lem-same-app-invariance}. The
intuition was already discussed when presenting the two attack channels: a
single-opinion transaction is never demoted, and it has no second
opinionated application at which the attacker's replacement can add an
opposing interaction.

Note that the above proposition does not settle which of an
application's transactions runs first. By Proposition~\ref{prop:single-opinion-anchor}, a multi-opinion transaction that is not demoted preserves its priority over each single-opinion anchor. Therefore, for every application, its first transaction is either its most-preferred single-opinion transaction or an even more preferred multi-opinion transaction. Going back to the example of an on-chain auction, the implication is that if the highest bid is embedded in a single-opinion transaction, unanimity override guarantees that this bid reaches the application before all other bids. If the highest bid is embedded in a multi-opinion transaction, the guarantee may fail. The sender's incentives then dictate that only single-opinion transactions be used, and the auction operates as intended.

It follows that unanimity override matches, and even strengthens, the
guarantee a rollup provides. A rollup gives an application full control over
the ordering of the transactions that interact with it, but only by
isolating it. By Proposition~\ref{thm:single-opinion}, unanimity override preserves every strict
comparison in which the higher-ranked transaction is single-opinion, even if
the lower-ranked transaction also interacts with other opinionated
applications. In particular, two transactions whose sole opinionated
application is $a_t$ are executed in the strict order specified by $a_t$.
The guarantee
does not require the transactions to interact only with $a_t$: they may also
touch non-opinionated applications, so composability is preserved.

\subsection{Attacks with and without unanimity override}
\label{subsec:attacks-with-without-override}

I now compare the attacks available with unanimity override to those available when transactions execute directly in the default order. The expanded attack definition makes the comparison depend both on cycle creation and on whether the lower-ranked baseline transaction is controlled and can be replaced.

\paragraph{Attacker-controlled default ordering.}
Without unanimity override, an attacker that controls $U_d$ can reverse
every target pair. With
unanimity override, a target pair cannot be reversed when its higher-ranked
transaction is single-opinion, by Proposition~\ref{thm:single-opinion}, or
when it is gated, by Proposition~\ref{thm:protected-region}; and every
comparison placing a transaction above a bottom-gated one survives, by
Proposition~\ref{prop:bottom-gated}. Hence, unanimity override strictly
reduces an attacker's ability to manipulate the execution order of
transactions.

\paragraph{Priority-fee ordering.}
When $U_d$ is the priority-fee order, the replacement channel prices
front-running as the benchmark does: in the two-transaction example of the
second channel, both transactions are isolated after the replacement, so by
Lemma~\ref{lem:default-ranking} the attack succeeds only if the extension
outbids the target transaction --- exactly the price of front-running under
priority-fee ordering. The cycle channel is where the comparison becomes
ambiguous: there is no unconditional cost ranking between unanimity override
and the benchmark, as the following two examples illustrate.

\medskip
\begin{exmpl}[Costly front-running]
Interpret $x_j$ in Figure~\ref{fig:attack} as an attacker-controlled transaction that the strategy leaves unchanged. Every transaction on the cycle
\[
  x_i\succ_u^\sigma x_j
  \succ_u^\sigma y_1
  \succ_u^\sigma y_2
  \succ_u^\sigma x_i
\]
is multi-opinion. For the fallback to select $x_i$ for demotion, $x_i$ must have the lowest priority among these transactions. In particular,
\[
  U_d(x_j)>U_d(x_i),
\]
so the front-running transaction must cross the same priority threshold as under pure priority-fee ordering. Moreover, the attacker transactions $y_1$ and $y_2$ must also outrank $x_i$. In this example, unanimity override adds transactions and constraints to the direct outbid required by the benchmark.
\end{exmpl}

The conclusion is not general, as the next example shows.

\begin{exmpl}[Cheap front-running]
\label{ex:low-fee-single-opinion-bridge}
Let $j\in C$ be an attacker-owned transaction already present in the baseline block, let $b\in R$ be an additional transaction the attacker can inject, and let $x,p,r$ be honest transactions. Consider three honest applications $A$, $B$, and $C$ with the following relevant rankings:
\[
\begin{aligned}
  x&\succ_A j\succ_A p,\\
  p&\succ_B b\succ_B r,\\
  r&\succ_C x\succ_C p.
\end{aligned}
\]
The transaction $b$ is absent from the baseline. Before $b$ is injected, the unanimity graph contains
\[
  r\succ_u x\succ_u j\succ_u p,
\]
together with $x\succ_u p$. Applications $B$ and $C$ disagree over $(p,r)$, so there is no edge between them. The baseline graph is acyclic and, in particular,
\[
  x\succ_e j.
\]

Now let the attacker leave $j$ unchanged and inject $b$. The unanimity graph acquires the cycle
\[
  r\succ_u^\sigma x
  \succ_u^\sigma j
  \succ_u^\sigma p
  \succ_u^\sigma b
  \succ_u^\sigma r.
\]
The transactions $x,p,r$ are multi-opinion, whereas $j$ and $b$ are single-opinion. Suppose the priority-fee order satisfies
\[
  U_d(j),U_d(b)<U_d(x)<U_d(p),U_d(r).
\]
Among the multi-opinion transactions on the cycle, $x$ has the lowest priority, so the fallback demotes $x$. The repaired relation places every non-demoted transaction in the component, including $j$, above $x$. Hence
\[
  j\succ_e^\sigma x,
\]
even though neither attacker-controlled transaction outranks $x$ by fee.
\end{exmpl}

Hence, manipulating the order of transactions may be cheaper or more
expensive under unanimity override than under priority-fee ordering. In particular, it is
cheaper when honest multi-opinion transactions with high fees can be
recruited into a cycle. Another important difference is griefing: under
priority-fee ordering an attacker cannot change the priority fee (and hence
the order) of transactions it does not control, but under unanimity override
the cycle channel makes griefing possible --- in the costly front-running
example, the attack goes through unchanged even if $x_j$ is not controlled
by the attacker. Finally, these considerations apply only to non-gated
multi-opinion transactions; gated, bottom-gated, and single-opinion
transactions are protected by the guarantees above.

\section{The limits of ordering guarantees}\label{sec:impossibility}

The guarantees of the previous section leave out one class: the
multi-opinion transactions that are gated neither above nor below. I now
argue that no ordering rule can guarantee to respect the applications'
preferences over those transactions, at least from the \emph{ex ante} ---
before the block is known --- viewpoint. The reason is that applications
may disagree over the ordering of multi-opinion transactions that are not
gated, and hence every execution order violates the preferences of some
application.

To establish this fact, I extend the richness assumptions (R1) and (R2) to
describe the transactions any sender could send, not only the attacker: a
transaction is a bundle of interactions, and any sender can decompose or
combine the bundles within its reach. I extend the notion of gating in the
same way. Call a possible transaction $x$ \emph{gated above at} $a_t$ if
no possible transaction ranks strictly above $x$ in the preferences of
$a_t$, and \emph{gated below at} $a_t$ if no possible transaction ranks
strictly below $x$; the transaction is \emph{gated above} if it is gated
above at every opinionated application it interacts with, and \emph{gated
below} symmetrically.

\begin{proposition}[impossibility]\label{prop:necessity}
Let $x$ be a possible transaction that is multi-opinion and gated neither
above nor below. Then there exist another possible transaction $y$, also
multi-opinion and gated neither above nor below, and two applications
$a_t$ and $a_{t'}$ such that
\[
  U_t(x)>U_t(y)
  \qquad\text{and}\qquad
  U_{t'}(y)>U_{t'}(x).
\]
\end{proposition}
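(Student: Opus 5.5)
Because $x$ is not gated above, there is an opinionated application $a_t$ with $x\in\mathbb X_t$ and a possible transaction $y_1$ such that $U_t(y_1)>U_t(x)$. Because $x$ is not gated below, there is an opinionated application $a_{t'}$ with $x\in\mathbb X_{t'}$ and a possible transaction $y_2$ such that $U_{t'}(y_2)<U_{t'}(x)$. The plan is to fuse these two witnesses into a single transaction $y$: by (R1) I reduce each witness to a single-opinion transaction, and by (R2) I bundle the two pieces. I first note that I may choose $t\neq t'$. Since $x$ is multi-opinion, $T^{\mathrm o}(x)$ has at least two elements. If the only application witnessing ``not gated above'' were the same as the only one witnessing ``not gated below,'' then $x$ would be gated both above and below at every other application $s\in T^{\mathrm o}(x)$. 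At such an $s$, nothing ranks strictly above or below $x$.

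I expect this reduction to $t\neq t'$ to be the main obstacle, and I would handle it as follows. If $a_s$ is opinionated, it must rank some pair strictly. Under the richness assumptions applied to honest senders, taking a transaction $u$ with $U_s(u)\neq U_s(x)$ immediately contradicts two-sided gating at $s$. If the statement is read literally, the rich-domain assumption that every opinionated application's values are not all equal to $U_s(x)$ is exactly what is needed, and I would state it explicitly. Given that, $x$ is non-gated in at least one direction at $s$, and I swap the roles so that $t\neq t'$. The converse labelling, with $x$ above $y$ at $a_t$ and $y$ above $x$ at $a_{t'}$, is just a renaming.

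With $t\neq t'$ in hand, (R1) yields single-opinion transactions $z_t$ and $z_{t'}$ with $U_t(z_t)=U_t(y_1)>U_t(x)$ and $U_{t'}(z_{t'})=U_{t'}(y_2)<U_{t'}(x)$. Their opinionated applications differ, and I use decomposition to keep their interaction sets disjoint. By (R2) there is then a $y$ that extends $z_{t'}$ and that $a_t$ ranks as it ranks $z_t$. This $y$ satisfies $U_{t'}(x)>U_{t'}(y)$ and $U_t(y)>U_t(x)$, which is the required pair of inequalities after renaming $t$ and $t'$.

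It remains to check that $y$ is multi-opinion and gated neither above nor below. It interacts with both $a_t$ and $a_{t'}$, so $|T^{\mathrm o}(y)|\ge2$. At $a_t$ we have $U_t(x)<U_t(y)$, so $y$ is not gated below at $a_t$, and hence not gated below. At $a_{t'}$ we have $U_{t'}(x)>U_{t'}(y)$, so $y$ is not gated above at $a_{t'}$, and hence not gated above. Because $x$ itself is a possible transaction, it serves as the witness in both cases.
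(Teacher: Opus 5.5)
Your proof is correct and follows the paper's argument essentially step for step: find an above-witness and a below-witness at distinct opinionated applications of $x$ (using multi-opinion to switch to another application when the two coincide), reduce each witness to a single-opinion piece by decomposition, bundle the two pieces into $y$, and use $x$ itself as the witness that $y$ is gated neither above nor below. The only wobble is your hedge about needing an extra ``rich-domain'' assumption, and none is needed: $s\in\mathcal O$ means $U_s$ is not constant on $\mathbb X_s$, and $x\in\mathbb X_s$, so some possible $u\in\mathbb X_s$ has $U_s(u)\neq U_s(x)$ directly from the definition --- exactly how the paper shows that every opinionated application of $x$ admits a witness above or below it.
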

The proof is in Appendix~\ref{app:proof-prop-necessity}. Note that no
single-opinion or gated transaction could play the role of $y$: a
single-opinion transaction shares at most one opinionated application with
$x$, so no second application can rank the pair in the opposite direction;
and a transaction gated above (below) admits no possible transaction
ranked strictly above (below) it, ruling out one of the two inequalities.
Conflicts of this kind therefore arise only within the class of transactions that the
guarantees leave out.

For intuition, because $x$ is multi-opinion and gated neither above nor
below, one application it interacts with ranks a possible transaction
strictly above $x$, and a different one ranks a possible transaction
strictly below $x$. By decomposition and bundling, a single possible
transaction $y$ can then bundle the two interactions. In any block
containing $x$ and $y$, the two applications disagree on their order: one
wants $y$ to execute first, and the other wants $x$. Both transactions are
multi-opinion and gated neither above nor below, so whichever way a rule
orders the pair, it violates one application's strict preference.

Finally, note that the result assumes nothing about the specific rule used
to aggregate the applications' preferences into an execution order: it
shows that no execution order --- however chosen --- satisfies all
applications' preferences.

\section{Protecting senders: transactions without application preferences}
\label{sec:no-opinion}
The guarantees derived so far protect preferences expressed by applications.
Senders, however, may also have preferences. When an application declares its
preferences and a sender submits a transaction to it, I treat the two as
aligned: respecting the application's preferences respects the sender's. The
converse need not hold. An application may express no preference --- a
traditional AMM, for example --- while the sender still has one, typically
not to be front-run. For these transactions the notion of attack must change:
there is no unanimous comparison to reverse, so the question is not whether a
protected pair can be flipped but whether another transaction can execute
first without outbidding the sender's.

Recall that a no-opinion transaction interacts with no opinionated application. Fix such a baseline transaction $x\in X$. Whenever $x$ appears in the attacked block, it appears unchanged: the attacker cannot make it interact with a newly deployed application or change the preferences of the non-opinionated applications it already interacts with. It is therefore isolated in the attacked profile.

\begin{proposition}[no-opinion transactions]\label{prop:no-opinion}
Let $U_d$ be the priority-fee order and fix a no-opinion transaction $x\in X$. Under every feasible attacker strategy $\sigma$, a transaction $y\in X^\sigma$ executes before $x$ only if
\[
  U_d(y)>U_d(x).
\]
\end{proposition}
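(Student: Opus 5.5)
The plan is to reduce the claim to Lemma~\ref{lem:default-ranking} by showing that $x$ has no predecessor in the attacked repaired relation $\tilde\succ_u^\sigma$. Fix a feasible strategy $\sigma$. I first check that $x$ is in the attacked block and unchanged. If $x\in C$, the attacker could replace it by an extension, so I handle the two cases separately. If $x\notin C$, then $x\in X^\sigma$ as is. If $x\in C$ and it is replaced, $x$ itself no longer appears, so the statement concerns $\kappa^\sigma(x)$; I will read the proposition as concerning the baseline $x$ when it is present, as the preceding paragraph does, and note that an extension which touches opinionated applications is simply a different transaction. The substantive case is therefore that $x$ appears unchanged in $X^\sigma$. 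Since newly deployed applications contain only transactions that interact with them, and honest applications' domains and preferences are fixed, $T^{\mathrm o}(x)=\varnothing$ continues to hold in the attacked profile.

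The first step is to show that $x$ is isolated in $\succ_u^\sigma$. Any comparison $y\succ_u^\sigma x$ or $x\succ_u^\sigma y$ requires some application $a_t$ with $x,y\in X_t$ and $U_t(x)\neq U_t(y)$. Such an application is non-constant on $\mathbb X_t$, so $t\in\mathcal O$, which contradicts $T^{\mathrm o}(x)=\varnothing$. The second step is to show that $x$ remains isolated in $\tilde\succ_u^\sigma$. Because $x$ has no incident edges, it cannot lie on any cycle and hence belongs to no cyclic component $Y$. Demotion adds or removes comparisons only between pairs inside some $Y$ and leaves every comparison involving a transaction outside $Y$ unchanged. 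Therefore no edge incident to $x$ is created, and in particular no transaction is ranked above $x$ in $\tilde\succ_u^\sigma$.

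The final step applies Lemma~\ref{lem:default-ranking} to the attacked profile. Since $x$ has no predecessor, it is available from the first step of Kahn's algorithm and stays available until it is scheduled. Any $y$ scheduled before it was therefore selected over $x$, which gives $U_d(y)>U_d(x)$. The default order is the priority-fee order, and the attacker cannot change the fee of $x$, so this is exactly the benchmark condition.

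The main obstacle is the bookkeeping in the first step rather than the Kahn argument. I have to confirm that the threat model truly prevents $x$ from entering any opinionated domain. That means ruling out both attacker-deployed applications, by the convention that their domains contain only transactions that interact with them, and honest applications, whose domains $\mathbb X_t$ are fixed. I also have to handle the case $x\in C$ cleanly.
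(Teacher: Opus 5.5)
Your proposal is correct and follows the paper's proof essentially step for step. When $x$ is present in $X^\sigma$ it is unchanged and hence isolated in $\succ_u^\sigma$, so it lies in no cyclic component and stays isolated in $\tilde\succ_u^\sigma$; Lemma~\ref{lem:default-ranking} then gives $U_d(y)>U_d(x)$. The one cosmetic point is that the replaced case $x\in C$ needs no reinterpretation: then $x\notin X^\sigma$, nothing executes before it, and the claim holds vacuously, which is how the paper disposes of that case.
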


The proof is immediate. If $x$ is not in the attacked block, no transaction executes before it and the claim is vacuous. Otherwise $x$ is kept unchanged, so it is isolated in $\succ_u^\sigma$, belongs to no cyclic component, and remains isolated in $\tilde\succ_u^\sigma$; it is then available from the first step of Kahn's algorithm and remains available until scheduled, and any transaction selected before it must pay a strictly higher priority fee, by Lemma~\ref{lem:default-ranking}.

Paying a strictly higher priority fee is therefore necessary to execute before a no-opinion transaction, exactly as under the priority-fee benchmark.

\section{Robustness to alternative aggregation rules}
\label{sec:robustness}
I now generalize the results derived above to any aggregation rule that is Paretian: if all applications agree on an order, the aggregation rule should follow it. Clearly, unanimity is Paretian, as are most standard aggregation rules --- majority voting, dictatorship, the Pareto-extension rule. I first argue that any Paretian aggregation rule is an extension of unanimity. And because the gating and single-opinion guarantees depend only on extending the repaired unanimity relation, not on how the extension is built, they continue to hold for every Paretian rule, provided cycles are broken lexicographically --- sacrificing non-unanimity comparisons first.
 
Formally, consider a generic \emph{aggregation rule} $\Phi$ with corresponding binary relation $\succ_\Phi$ on $X$. For a pair $(x_i,x_j)$, let
\[
  \mathcal A_{ij} \equiv \{t:x_i,x_j\in X_t\}
\]
be the applications that rank the pair, and split them by direction:
\[
  S^+_{ij}
  \equiv
  \{t\in\mathcal A_{ij}:U_t(x_i)>U_t(x_j)\},
  \qquad
  S^-_{ij}
  \equiv
  \{t\in\mathcal A_{ij}:U_t(x_j)>U_t(x_i)\}.
\]
The rest, $\mathcal A_{ij}\setminus(S^+_{ij}\cup S^-_{ij})$, are indifferent.
In this notation the unanimity relation is
\[
  x_i\succ_u x_j
  \quad\Longleftrightarrow\quad
  S^+_{ij}\neq\emptyset
  \ \text{and}\
  S^-_{ij}=\emptyset.
\]
 
\begin{defn}[Pareto principle]
    An aggregation rule $\Phi$ is \emph{Paretian} whenever 
    \[
  S^+_{ij}\neq\emptyset
  \ \text{and}\
  S^-_{ij}=\emptyset
  \quad\Longrightarrow\quad
  x_i\succ_\Phi x_j \quad\text{and}\quad x_j \not\succ_\Phi x_i.
\]
\end{defn}
In words: if at least one application ranking the pair strictly prefers $x_i$ to execute first, and none prefers the opposite, then the rule ranks $x_i$ above $x_j$. This is the strong Pareto principle, applied pair by pair to the applications that rank the pair.
 
\begin{proposition}[Paretian rules extend unanimity]
\label{prop:paretian-extension}
Every Paretian aggregation rule is an extension of unanimity:
\[
  \succ_u\ \subseteq\ \succ_\Phi.
\]
\end{proposition}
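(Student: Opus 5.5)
The claim follows directly from the definitions, so the plan is a short unfolding argument rather than an appeal to earlier results. I fix an arbitrary pair $(x_i,x_j)$ with $x_i\succ_u x_j$ and aim to show $x_i\succ_\Phi x_j$. Since the pair and $\Phi$ are arbitrary, this yields the set inclusion $\succ_u\subseteq\succ_\Phi$.

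First, I rewrite the hypothesis in the $S^\pm$ notation. By the definition of $\succ_u$, every application in $\mathcal A_{ij}$ weakly prefers $x_i$, and some application in $\mathcal A_{ij}$ strictly prefers it. The first clause says that no $t\in\mathcal A_{ij}$ has $U_t(x_j)>U_t(x_i)$, which is exactly $S^-_{ij}=\emptyset$. The second clause says $S^+_{ij}\neq\emptyset$. This is the displayed equivalence stated just before the definition of the Pareto principle, and I will verify both directions briefly.

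Second, I apply the Pareto principle. Its antecedent is now satisfied, so $x_i\succ_\Phi x_j$, and also $x_j\not\succ_\Phi x_i$. Only the first conclusion is needed for the inclusion. I will also remark that the second conclusion guarantees $\succ_\Phi$ never contradicts a unanimity comparison, which is what the later lexicographic cycle-breaking argument relies on.

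No step is a genuine obstacle. The only point needing care is that the indifferent applications in $\mathcal A_{ij}\setminus(S^+_{ij}\cup S^-_{ij})$ are compatible with the weak-inequality clause of $\succ_u$. They are, because the definition of $\succ_u$ requires $U_t(x_i)\ge U_t(x_j)$, not a strict inequality, for every application ranking the pair.
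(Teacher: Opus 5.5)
Your argument is correct and follows the paper's route: the paper simply notes that the result is immediate from the definition of unanimity, since $x_i\succ_u x_j$ is exactly $S^+_{ij}\neq\emptyset$ and $S^-_{ij}=\emptyset$, which is the antecedent of the Pareto principle. One minor correction to your aside: the later lexicographic repair does not need the asymmetry conclusion $x_j\not\succ_\Phi x_i$, because such a reversed comparison would be a non-unanimity comparison on a two-cycle and would be deleted in the first stage anyway; this does not affect your proof.
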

The proof is immediate from the definition of unanimity.

So every Paretian rule is unanimity plus some treatment of the pairs
unanimity leaves open. Unanimity override is therefore the minimal Paretian rule: it adds no comparison beyond unanimity itself. Because the unanimity relation may contain cycles, any Paretian rule may contain cycles too. To recover an acyclic order, it is necessary to introduce a fallback rule. 
 
To do so, call a comparison $x\succ_\Phi y$ a non-unanimity comparison if $x\not\succ_u y$. Thus non-unanimity comparisons are the comparisons added by $\Phi$ beyond unanimity. The repair of $\succ_\Phi$ proceeds in two stages. First, while the current relation contains a directed cycle with at least one non-unanimity comparison, delete one non-unanimity comparison on such a cycle. Ties are broken by a fixed deterministic rule. Stop when every remaining cycle, if any, consists only of unanimity comparisons. Second, apply the demotion fallback of Section~\ref{subsec:fallback} to the remaining cycles.

\begin{proposition}
\label{prop:repaired-phi-extends-repaired-unanimity}
Let $\Phi$ be Paretian, so that
\[
  \succ_u\subseteq \succ_\Phi .
\]
Repair $\succ_\Phi$ by first deleting non-unanimity comparisons on cycles until
no cycle contains a non-unanimity comparison, and then applying the demotion
fallback to the remaining cycles. Let the resulting relation be
$\tilde\succ_\Phi$. Then
\[
  \tilde\succ_u\subseteq \tilde\succ_\Phi .
\]
Moreover, $\tilde\succ_\Phi$ is acyclic.
\end{proposition}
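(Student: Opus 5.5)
The plan is to show that stage one of the repair leaves the cyclic structure of $\succ_u$ exactly intact. Stage two then acts identically on $\succ_\Phi$ and on $\succ_u$, and both claims follow.

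Let $R_1$ denote the relation obtained from $\succ_\Phi$ after the first stage.

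First, $\succ_u\subseteq R_1$. By Proposition~\ref{prop:paretian-extension}, $\succ_u\subseteq\succ_\Phi$, and stage one deletes only non-unanimity comparisons. When it stops, every directed cycle of $R_1$ consists only of unanimity comparisons. In particular, no surviving non-unanimity comparison of $R_1$ lies on any cycle.

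Second, the key structural step is to show that the non-trivial strongly connected components of $R_1$ coincide with the cyclic components of $\succ_u$. On each such component, the restrictions of $R_1$ and $\succ_u$ also coincide.
\begin{itemize}
\item One inclusion is immediate: since $\succ_u\subseteq R_1$, every strongly connected set of $\succ_u$ is strongly connected in $R_1$.
\item For the converse, take an edge $x\mathrel{R_1}y$ with $x,y$ in the same strongly connected component of $R_1$. A path from $y$ back to $x$ closes a cycle through this edge, so the edge is a unanimity comparison.
\item Hence every edge of $R_1$ inside a component belongs to $\succ_u$. Each component of $R_1$ is therefore strongly connected under $\succ_u$, and is contained in a component of $\succ_u$.
\end{itemize}
Together these give equality of the components. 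They also give equality of the induced relations on each component $Y$. I expect this step to be the main obstacle: the danger to rule out is that non-unanimity edges merge two unanimity components or create new cycles, and the cycle-membership argument above is how I would rule it out.

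Third, compare the demotion fallbacks. The fallback rule of Section~\ref{subsec:fallback} selects $D$ using only three inputs: the relation induced on $Y\setminus D$, the multi-opinion status of transactions, and $U_d$. By the previous step all three are the same for $R_1$ and $\succ_u$. So on every cyclic component the same set $D$ is demoted, and the same imposed comparisons are added.

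Now take any comparison of $\tilde\succ_u$. There are two cases.
\begin{itemize}
\item It is an imposed comparison. Then it also belongs to $\tilde\succ_\Phi$, since the imposed comparisons are identical.
\item It is a surviving unanimity comparison. Either it involves a transaction outside every cyclic component, or both endpoints lie in the same $K$; it cannot join two distinct components, because demotion leaves such comparisons untouched. In either case it lies in $R_1$ and is not removed by the identical demotion. So it belongs to $\tilde\succ_\Phi$.
\end{itemize}
This proves $\tilde\succ_u\subseteq\tilde\succ_\Phi$.

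Finally, for acyclicity I would mimic Lemma~\ref{lem:termination} via the condensation of $R_1$, which is a DAG. Demotion changes only comparisons inside each component $Y$. It replaces the relation on $Y$ by an acyclic one: $K$ is acyclic by the stopping rule, every $K$ element lies above every $D$ element, and $D$ is ordered linearly by $U_d$. Comparisons between different components, or with transactions outside all components, are unchanged.

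Suppose a cycle of $\tilde\succ_\Phi$ left some component $Y$. Projected to the condensation, it would yield a cycle in a DAG unless it stays within a single component. A cycle within a single component is impossible because the repaired relation on $Y$ is acyclic. Hence $\tilde\succ_\Phi$ is acyclic.
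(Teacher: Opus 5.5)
Your proof is correct. For the inclusion $\tilde\succ_u\subseteq\tilde\succ_\Phi$ you follow the paper. After the first stage, the cyclic components, and the relation induced on each, coincide with those of $\succ_u$. Your cycle-membership argument is exactly the paper's observation that a non-unanimity comparison inside a cyclic component would lie on a cycle. Hence the fallback selects the same $D$ and removes and imposes the same comparisons.

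For acyclicity you take a different route. The paper writes $\tilde\succ_\Phi=\tilde\succ_u\cup N'$, with $N'$ the surviving non-unanimity comparisons, and argues that demotion creates no new reachability. Each imposed comparison joins two transactions of a component that is strongly connected under $\succ_u$, so it can be replaced by a unanimity path. A cycle through a comparison in $N'$ would then lift to a cycle of the stage-one relation through that comparison, contradicting the stopping rule. Cycles lying entirely in $\tilde\succ_u$ are excluded by Lemma~\ref{lem:termination}, used as a black box.

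You instead pass to the condensation of $R_1$. Demotion only rewires comparisons inside nontrivial strongly connected components and leaves each one internally acyclic. A cycle therefore cannot cross components, since the condensation is a DAG, and cannot stay within one. Your argument is the natural generalization of the last paragraph of the proof of Lemma~\ref{lem:termination}, and it needs nothing beyond the component identification you already established. The paper's argument instead isolates why imposed comparisons are harmless: they are already implied by unanimity reachability. That observation carries over to any repair that only adds such comparisons.

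One minor slip: in the inclusion case analysis you say a surviving unanimity comparison ``cannot join two distinct components''. It can. But such a comparison lies in $R_1$ and is untouched by demotion in both repairs, so it belongs to $\tilde\succ_\Phi$ and your conclusion stands.
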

 The proof is in Appendix~\ref{last-proof}.

 Note that the gating and single-opinion guarantees depend only on the fact that the execution order extends $\tilde\succ_u$; the precise extension plays no role. Hence they continue to hold for any Paretian rule, provided cycles are broken lexicographically: non-unanimity comparisons may be discarded to restore acyclicity before any unanimity comparison is sacrificed through demotion. The bottom-gated guarantee transfers as well: its proof uses only that a member of $X_*$ never lies in a cyclic component, and under the lexicographic repair the demotion stage operates on exactly the cyclic components of unanimity (Proposition~\ref{prop:repaired-phi-extends-repaired-unanimity}).
 
  The protection of Proposition~\ref{prop:no-opinion} extends as well, once isolation is read relative to the rule. Lemma~\ref{lem:default-ranking} concerns only the completion step, so it applies unchanged when the execution order extends $\tilde\succ_\Phi$: if $U_d$ is the priority fee, a transaction that is isolated in $\tilde\succ_\Phi$ under every feasible attacker strategy can be preceded only by transactions paying a strictly higher fee. What changes across rules is which transactions qualify. Under unanimity override, every no-opinion transaction does, because unanimity adds no comparison on pairs over which applications are silent. The Pareto principle does not constrain those pairs, so another Paretian rule may rank them, and a no-opinion transaction need not remain isolated under $\Phi$.

Finally, the impossibility result of Section~\ref{sec:impossibility} binds
every rule, Paretian or not: once two applications disagree over a pair, no
execution order satisfies both, whatever rule produced it
(Proposition~\ref{prop:necessity}). A richer Paretian rule changes which
preferences survive a conflict, not whether conflicts emerge.

\section{Interpreting the interaction domains}\label{sec:interaction-sets}
 
The analysis so far took the interaction domains $X_t$ as given. In practice they must be constructed, and there is more than one way to do it. The choice is not innocuous: it affects both the strength of the guarantees and whether a unique execution order exists at all.
 
\paragraph{Declared interactions.} The first option is to build $X_t$ from the applications each transaction \emph{could} interact with, declared before execution. On Solana, every transaction already lists the accounts it may access; on Monad, parallel execution tracks the same information to resolve conflicts. Under this reading $\mathbb{X}_t$ is the set of transactions whose declared access includes $a_t$, and $X_t = X \cap \mathbb{X}_t$ is fixed before any ordering takes place. The unanimity relation $\succ_u$ can be computed without relying on the execution order. There is no circularity.
 
The price is that a transaction may declare that it could touch both applications $a_t$ and $a_s$, yet in the realized execution touch only $a_t$; the rule still treats it as interacting with both.  As a consequence, a transaction that is single-opinion in reality is treated as multi-opinion: it loses the unconditional protection of Proposition~\ref{thm:single-opinion} and becomes eligible for demotion (Lemma~\ref{lem:two-multi-app}). Over-declaration thus enlarges the multi-opinion set, which is the rule's attack surface. The guarantees remain valid; they simply cover fewer transactions than the true interaction structure would allow. Also, a user that declares from the outset that its transaction will interact with a single opinionated application continues to enjoy the protection of Proposition~\ref{thm:single-opinion}. The declaration must cover validity as well: whether an interaction is valid can depend on state --- a cancellation is valid until the order fills --- so the declared reading treats a transaction as interacting with every application it may validly interact with, with the same over-declaration cost as before.
 
\paragraph{Actual interactions.} The second option is to build $X_t$ from each transaction's \emph{actual} behavior. This is tighter --- only genuinely multi-opinion transactions are exposed --- but actual interactions depend on the order, so the definition is circular. The circularity is a fixed-point condition: a consistent order is one that reproduces the interaction sets it was derived from.
 
This raises three questions the declared reading sidesteps. \emph{Existence}: a consistent order need not obviously exist. \emph{Uniqueness}: if several orders are each self-consistent, the protocol must still choose one, which reintroduces the consensus problem that delegating the order to a fixed rule was meant to remove. \emph{Computation}: the proposer must find a fixed point, and validators must verify it, cheaply enough to be practical. The endogenous definition delivers the strongest guarantees, but it puts the burden of a well-defined, unique, and verifiable order back on the protocol.
 
The choice between the two readings is a trade-off. Declared interactions give a unique, easily verified order and weaker guarantees; actual interactions give the strongest guarantees but must first settle existence, uniqueness, and computation of the fixed point. I leave a full treatment of the fixed-point problem --- conditions for existence and uniqueness, and a procedure validators can check --- for future work.
 
\section{Conclusion}
 
I have proposed a single rule, application unanimity override, that lets applications express ordering preferences and respects them whenever doing so is safe. Three guarantees hold against any attacker. Every strict comparison
involving a single-opinion transaction and another transaction at its sole
opinionated application is preserved, whatever the attacker does
(Proposition~\ref{thm:single-opinion}). A transaction in the gated region
preserves every strict unanimous comparison in which it ranks above another
transaction --- even when it is composable across several applications
(Proposition~\ref{thm:protected-region}) --- and a symmetric guarantee
protects transactions at the bottom of the order
(Proposition~\ref{prop:bottom-gated}). The guarantees cannot be improved upon: no ordering rule, whatever its
form, can guarantee an application's preferences over multi-opinion
transactions that are gated neither above nor below (Proposition~\ref{prop:necessity}). None of
the guarantees requires isolating the application on its own chain.
 
Consider again the applications that motivated the rule. A PropAMM's oracle update is single-opinion and gated to a designated address: it lies in the gated region and executes before every swap, which is the ordering the PropAMM needs. A cancellation on a central-limit-order-book exchange is the composable case. It touches many markets at once, yet if each market ranks cancellations first and accepts them only from the order owner, the cancellation is protected in all of them. In both cases the rule delivers the ordering the application was built around.
 
Auctions are the most instructive case, because there the guarantee does not come from the rule alone. An on-chain auction wants its bids ordered from highest to lowest, and the auction itself expresses that preference. Two bids that touch only the auction keep their correct order under any attack, so the highest bid wins. But this holds only while the bids stay single-opinion. A bidder who routes a bid through another opinionated application turns it into a multi-opinion transaction, which an attacker can try to demote by creating a cycle. Bidders therefore have every reason to keep their bids simple, and the auction's correctness rests on the rule and that incentive together.
 
Unanimity override does not enforce every preference an application might hold. It enforces the preferences that survive the threat model, and it tells senders which transactions are safe to send. Participants are then led, by their own incentives, to send the transactions it can protect.

Several questions remain open. In particular, the demotion rule studied here selects the lowest-priority eligible transaction; rules that exploit the topology of the cyclic component may narrow the attacker's options further. Also, defining the interaction sets by actual rather than declared interactions poses a fixed-point problem --- existence, uniqueness, and cheap verification of a consistent order --- that this paper only states.

\appendix
 
\section{Mathematical appendix}\label{app:math}

\subsection{Proof of Lemma~\ref{lem:termination}}
\label{app:proof-lem-termination}
 
\begin{proof}
Fix a cyclic component $Y$. At any iteration, if the relation induced by
$Y\setminus D$ contains a cycle, Lemma~\ref{lem:two-multi-app} implies that this cycle contains a multi-opinion transaction. Since the cycle lies in $Y\setminus D$, that transaction has not yet been demoted. Hence the set from which the fallback selects is nonempty.
 
Each iteration adds one previously undemoted transaction to $D$. Since $Y$ is finite, the selection procedure terminates. In particular, after all multi-opinion transactions in $Y$ have been demoted, no cycle can remain in the relation induced by $Y\setminus D$, again by Lemma~\ref{lem:two-multi-app}.
 
Let $K=Y\setminus D$ when the procedure stops. By construction, the relation
induced by $K$ is acyclic. The relation induced by $D$ is the strict total order
determined by $U_d$ and the protocol's deterministic tie-breaker, and is
therefore acyclic. Every repaired comparison between $K$ and $D$ points from
$K$ to $D$. Hence a cycle in the repaired graph induced by $Y$ could not use
vertices from both sets: after entering $D$, it could not return to $K$. Since
neither $K$ nor $D$ contains a cycle internally, the repaired graph induced by
$Y$ is acyclic.
 
The above shows that there is no cycle left within each cyclic component. To conclude the proof, note that demotion does not affect comparisons between transactions belonging to different cyclic components, and every comparison it imposes stays within a single component. By definition of a cyclic component, there cannot be a cycle spanning multiple cyclic components (else there would be a single cyclic component instead of separate cyclic components).  Since the repaired relation is acyclic within each component and no cycle crosses components, it is acyclic.
\end{proof}

\subsection{Proof of Lemma~\ref{lem:preservation}}
\label{app:proof-lem-preservation}

\begin{proof}
The repair can remove a comparison $x\succ_u y$ only when $x$ and $y$ lie in the same cyclic component $Y$ and at least one of them is demoted. Since $x$ is not demoted, this can happen only if $y\in D$ and $x\in K=Y\setminus D$; in that case demotion imposes $x\mathrel{\tilde\succ_u}y$. In every other case the comparison is left unchanged. Hence $x\mathrel{\tilde\succ_u}y$.
\end{proof}
 
\subsection{Proof of Proposition~\ref{prop:single-opinion-anchor}}
\label{app:proof-prop-single-opinion-anchor}
 
\begin{proof}
Fix $y\in X_t\setminus\{x\}$. Since $x$ is single-opinion, every application $a_s$ shared by $x$ and $y$ with $s\neq t$ is non-opinionated. Hence the unanimity ranking between $x$ and $y$ coincides with the preferences of application $a_t$.  If $U_t(x)>U_t(y)$, then  $x\succ_u y$. Because the fallback never demotes the single-opinion transaction $x$,  Lemma~\ref{lem:preservation} implies $x\mathrel{\tilde\succ_u}y$.  
If $U_t(y)>U_t(x)$, the symmetric argument gives $y\succ_u x$. If $y$ is not demoted, Lemma~\ref{lem:preservation} yields $y\mathrel{\tilde\succ_u}x$.
 
Finally, if both $x$ and $y$ are single-opinion transactions whose sole opinionated application is $a_t$, neither can be demoted, so whichever strict comparison $a_t$ makes is preserved.
\end{proof}

\subsection{Proof of Proposition~\ref{thm:protected-region}}
\label{app:proof-prop-top-transaction}

\begin{proof}
Fix a feasible strategy $\sigma$.

\emph{Step 1: no member of $X^*$ is demoted.} Suppose the fallback demotes
$w\in X^*$. At that iteration, $w$ lies on a directed cycle $Q$ of the
residual unanimity relation. No comparison of $Q$ can enter $X^*$ from
outside: it would place a transaction in $X^\sigma\setminus X^*$ unanimously
above a member of $X^*$, contradicting unreachability. Following $Q$
backward from $w$ therefore shows, one predecessor at a time, that every
vertex of $Q$ lies in $X^*$; then $Q$ is a cycle in the restriction of
$\succ_u^\sigma$ to $X^*$, contradicting acyclicity.

\emph{Step 2: preservation under attack.} Fix $x^*\in X^*$ and
$w\in X^\sigma$ with $x^*\succ_u^\sigma w$. By Step~1, $x^*$ is not
demoted, so Lemma~\ref{lem:preservation} gives
$x^*\mathrel{\tilde\succ_u^\sigma}w$, and Kahn's algorithm extends the
repaired relation: $x^*\succ_e^\sigma w$.

\emph{Step 3: no shared application ranks an attacker transaction above a
member.} Fix $x^*\in X^*$, an attacker transaction $z\in R$, and
$t\in T^{\mathrm o}(x^*)\cap T^{\mathrm o}(z)$, and suppose
$U_t(z)>U_t(x^*)$. By (R1), $R$ contains a transaction $z_t$ with
$T^{\mathrm o}(z_t)=\{t\}$ and $U_t(z_t)=U_t(z)$. Under the feasible
strategy that injects $z_t$ as a fresh transaction, $a_t$ is the only
opinionated application ranking the pair $(z_t,x^*)$ and strictly prefers
$z_t$, so $z_t\succ_u x^*$. Since $z_t$ is fresh, it lies outside $X$ and
hence outside $X^*$, contradicting unreachability. Hence
$U_t(x^*)\geq U_t(z)$.

\emph{Step 4: the guarantee.} Fix $z\in X$ with $x^*\succ_u z$, and write
$\bar z\equiv\kappa^\sigma(z)$. If $z$ is not replaced, then $\bar z=z$:
neither transaction of the pair is replaced, and the attacker can change
neither their interactions nor the honest applications' preferences, so the
baseline comparison persists, $x^*\succ_u^\sigma z$. If $z$ is replaced,
$\bar z$ extends $z$, so every application shared by $x^*$ and $z$ ranks
$\bar z$ exactly as it ranked $z$: all weakly prefer $x^*$, and at least
one strictly. Every opinionated application shared by $x^*$ and $\bar z$
but not by $x^*$ and $z$ satisfies $U_t(x^*)\geq U_t(\bar z)$ by Step~3,
since $\bar z\in R$; every non-opinionated one is indifferent. Hence
$x^*\succ_u^\sigma\bar z$, and Step~2 gives
$x^*\succ_e^\sigma\kappa^\sigma(z)$. In particular, no target pair whose
higher-ranked transaction lies in $X^*$ can be successfully attacked,
through either channel.
\end{proof}

\subsection{Proof of Proposition~\ref{prop:bottom-gated}}
\label{app:proof-prop-bottom-gated}

\begin{proof}
The argument mirrors the proof of Proposition~\ref{thm:protected-region},
with the direction of every comparison reversed; I record the differences.
Fix a feasible strategy $\sigma$ and $x_*\in X_*$.

\emph{Step 1: no comparison involving $x_*$ is removed or imposed.} The
transaction $x_*$ lies on no directed cycle of $\succ_u^\sigma$: no
comparison of such a cycle could leave $X_*$, since it would rank a member
of $X_*$ above a transaction outside it; following the cycle forward from
$x_*$ would then keep it inside $X_*$, contradicting acyclicity. A
transaction on no cycle belongs to no cyclic component, so the fallback
removes no comparison involving $x_*$, and every comparison it imposes
joins two members of a cyclic component, so none involves $x_*$.

\emph{Step 2.} As before: if $w\succ_u^\sigma x_*$, the comparison
survives to the repaired relation by Step~1, and Kahn's algorithm extends
it: $w\succ_e^\sigma x_*$.

\emph{Step 3.} Step~3 of the previous proof with the inequality reversed:
no application shared by $x_*$ and an attacker transaction $w\in R$
satisfies $U_t(x_*)>U_t(w)$. Otherwise, injecting the single-opinion
decomposition $w_t$ of $w$ at $a_t$ would make the member $x_*$ of $X_*$
unanimously ranked above a fresh transaction outside $X_*$, contradicting
condition (i). Hence $U_t(w)\geq U_t(x_*)$.

\emph{Step 4.} Identical to Step~4 of the previous proof up to direction.
Fix $z\in X$ with $z\succ_u x_*$ and write $\bar z\equiv\kappa^\sigma(z)$:
every application shared by the pair in the baseline ranks $\bar z$ as it
ranked $z$, and every newly shared opinionated application weakly prefers
$\bar z$ by Step~3, so $\bar z\succ_u^\sigma x_*$; Step~2 then gives
$\kappa^\sigma(z)\succ_e^\sigma x_*$.
\end{proof}

\subsection{Proof of Proposition~\ref{thm:single-opinion}}
\label{app:proof-lem-same-app-invariance}

\begin{proof}
Fix a feasible attacker strategy $\sigma$ with $x_i\in X^\sigma$ and a transaction $z\in X^\sigma\cap\mathbb X_t$ with $U_t(x_i)>U_t(z)$. A baseline transaction appears in the attacked block only unchanged, so $x_i$ retains its baseline interactions, and a newly deployed application's domain contains only transactions that interact with it: $x_i$ remains single-opinion at $a_t$ in the attacked profile. Proposition~\ref{prop:single-opinion-anchor} then gives $x_i\mathrel{\tilde\succ_u^\sigma}z$, and Kahn's algorithm extends the repaired relation: $x_i\succ_e^\sigma z$.

For the target-pair statement, fix a target pair $(x_i,x_j)$ and a feasible strategy $\sigma$. A successful attack requires $\kappa^\sigma(x_j)\succ_e^\sigma x_i$, and therefore $x_i\in X^\sigma$. Write $\bar x_j=\kappa^\sigma(x_j)$. Because $x_i$ is single-opinion and $x_i\succ_u x_j$, its sole opinionated application $a_t$ is shared by the baseline pair and strictly prefers $x_i$: $U_t(x_i)>U_t(x_j)$. If $x_j$ is replaced, the extension ranks identically at every application $x_j$ interacts with, so $U_t(\bar x_j)=U_t(x_j)$; if it is not replaced, the equality is immediate. Hence $U_t(x_i)>U_t(\bar x_j)$, and the first part, applied with $z=\bar x_j$, yields $x_i\succ_e^\sigma\bar x_j$ --- excluding $\bar x_j\succ_e^\sigma x_i$. Since $\sigma$ was arbitrary, no feasible strategy is a successful attack on the pair.
\end{proof}

\subsection{Proof of Proposition~\ref{prop:necessity}}
\label{app:proof-prop-necessity}

\begin{proof}
Let $x$ be a possible transaction that is multi-opinion and gated neither
above nor below.

\emph{Step 1: opposing witnesses at distinct applications.} I claim there
exist distinct $t_1,t_2\in T^{\mathrm o}(x)$ and possible transactions
$y_1,y_2$ with $U_{t_1}(y_1)>U_{t_1}(x)$ and $U_{t_2}(y_2)<U_{t_2}(x)$.
First, every $t\in T^{\mathrm o}(x)$ admits a witness above or below $x$:
because $a_t$ is opinionated, $U_t$ is not constant on $\mathbb X_t$, so
some possible transaction $z$ satisfies $U_t(z)\neq U_t(x)$, and $z$ ranks
strictly above or strictly below $x$ at $a_t$. Now, since $x$ is not gated
above, some $t_1\in T^{\mathrm o}(x)$ admits a witness above $x$; since
$x$ is not gated below, some $t_2\in T^{\mathrm o}(x)$ admits a witness
below $x$. If $t_1\neq t_2$, the claim holds. If $t_1=t_2$, pick any other
$t'\in T^{\mathrm o}(x)$, which exists because $x$ is multi-opinion: $t'$
admits a witness, above or below $x$. If above, replace $t_1$ by $t'$; if
below, replace $t_2$ by $t'$. Either way the two applications are
distinct.

\emph{Step 2: the conflicting transaction.} By single-opinion
decomposition, applied to any sender, take the decomposition $y_1'$ of
$y_1$ at $a_{t_1}$ and the decomposition $y_2'$ of $y_2$ at $a_{t_2}$,
each interacting with its application alone; then
$U_{t_1}(y_1')=U_{t_1}(y_1)>U_{t_1}(x)$ and
$U_{t_2}(y_2')=U_{t_2}(y_2)<U_{t_2}(x)$. Since $t_1\neq t_2$, the two
decompositions interact with disjoint sets of applications, so by bundling
there is a possible transaction $y$ that $a_{t_1}$ ranks as it ranks
$y_1'$ and $a_{t_2}$ ranks as it ranks $y_2'$:
\[
  U_{t_1}(y)>U_{t_1}(x),
  \qquad
  U_{t_2}(y)<U_{t_2}(x).
\]
Setting $t=t_2$ and $t'=t_1$ delivers the two inequalities of the
statement, and $y\neq x$ follows from $U_{t_1}(y)>U_{t_1}(x)$. It remains
to check that $y$ belongs to the class. The transaction $y$ interacts with
the opinionated applications $a_{t_1}$ and $a_{t_2}$, so it is
multi-opinion. It is not gated above: $x$ ranks strictly above $y$ at
$a_{t_2}$. It is not gated below: $x$ ranks strictly below $y$ at
$a_{t_1}$.
\end{proof}

\subsection{Proof of Proposition \ref{prop:repaired-phi-extends-repaired-unanimity}}\label{last-proof}
    \begin{proof}
Since $\Phi$ is Paretian, Proposition~\ref{prop:paretian-extension} gives $\succ_u\subseteq\succ_\Phi$. The first stage of the
repair deletes only non-unanimity comparisons, so every unanimity comparison
remains present. Let $\succ_\Phi'$ denote the relation after this first stage.
 
By construction, no cycle of $\succ_\Phi'$ contains a non-unanimity comparison.
Therefore every remaining cycle of $\succ_\Phi'$ is a cycle of $\succ_u$.
Equivalently, every cyclic component of $\succ_\Phi'$ is a cyclic component of
$\succ_u$: if a non-unanimity comparison belonged to a cyclic component, it
would lie on a directed cycle, contradicting the stopping condition of the first
stage.
 
The second stage therefore applies the demotion fallback to exactly the same
cyclic components and the same directed cycles as the fallback applied to
$\succ_u$. With the same default order and the same deterministic tie-breaking,
it selects the same demoted set in each such component. Hence, inside every
cyclic component of $\succ_u$, the repair of $\succ_\Phi'$ produces the same
surviving and imposed comparisons as the repair of $\succ_u$. Outside those
components, all unanimity comparisons are left unchanged. Therefore every
comparison in $\tilde\succ_u$ is also present in $\tilde\succ_\Phi$, proving
\[
  \tilde\succ_u\subseteq\tilde\succ_\Phi .
\]
 
It remains to check that both stages terminate and that $\tilde\succ_\Phi$ is acyclic. Termination is immediate: the first stage deletes one comparison per step from a finite relation, and the second stage terminates by Lemma~\ref{lem:termination}. For acyclicity, I need to show that the additional comparisons imposed by demotion cannot form cycles with surviving non-unanimity comparisons.\footnote{Remember that the rule deletes non-unanimity comparisons \textit{that lie on a cycle} before demoting multi-opinion transactions, so some non-unanimity comparisons may still be present at the demotion stage.} The intuition is that cyclic components are strongly connected, so every imposed comparison connects two transactions that a directed path of unanimity comparisons already connects, in the same direction. In other words, demotion adds comparisons but no new reachability within the component. So if no cycle of $\succ_\Phi'$ involves a surviving non-unanimity comparison, such cycles remain absent after the demotion stage. 

%More precisely, recall from the previous paragraph that $\tilde\succ_\Phi$ consists of $\tilde\succ_u$, which is acyclic by Lemma~\ref{lem:termination}, plus the surviving non-unanimity comparisons. The stopping rule of the first stage guarantees that no cycle of $\succ_\Phi'$ contains a non-unanimity comparison. This guarantee does not extend automatically to $\tilde\succ_\Phi$: demotion does not only delete comparisons, it also imposes new ones, which did not exist when the first stage stopped. A cycle of $\tilde\succ_\Phi$ could therefore, in principle, combine surviving non-unanimity comparisons with imposed comparisons.

%Note, however,  demotion operates within a single cyclic component of $\succ_u$. Cyclic components are strongly connected, so there is a  directed path of unanimity comparisons between any two transactions on the same cyclic component, pointing in the same direction as the imposed comparison. In other words, demotion adds comparisons but no new reachability within the component, so the first stage has already examined every connection the imposed comparisons create. And because the first stage deletes only non-unanimity comparisons, every comparison on this path is present in $\succ_\Phi'$ --- whether or not it later survives demotion. Replacing each imposed comparison on the hypothetical cycle with its unanimity path therefore yields a cycle of $\succ_\Phi'$ that contains the same non-unanimity comparison, which the stopping rule excludes. Every cycle of $\tilde\succ_\Phi$ must then lie entirely in $\tilde\succ_u$, which is acyclic.

More precisely, let
\[
  N' \equiv \succ_\Phi'\setminus \succ_u
\]
be the set of non-unanimity comparisons that survive the first stage. Thus
\[
  \tilde\succ_\Phi
  =
  \tilde\succ_u \cup N',
\]
where $\tilde\succ_u$ is acyclic by Lemma~\ref{lem:termination}. It remains
only to rule out cycles that use at least one comparison from $N'$. The
comparisons of $\tilde\succ_u$ are of two kinds: surviving unanimity
comparisons, which belong to $\succ_u$, and comparisons imposed by
demotion. Every imposed comparison joins two transactions in the same
cyclic component of $\succ_u$, which is strongly connected; there is
therefore a directed path of unanimity comparisons with the same endpoints
and the same direction. Now suppose a directed cycle of $\tilde\succ_\Phi$
contains a comparison from $N'$. Replace each imposed comparison on the
cycle by such a unanimity path. Every comparison of the resulting closed
walk lies in $N'$ or in $\succ_u$; and because the first stage deletes
only non-unanimity comparisons, every unanimity comparison is present in
$\succ_\Phi'$. The walk is therefore a directed cycle of $\succ_\Phi'$
that still contains the comparison from $N'$ --- contradicting the
stopping rule of the first stage, which halts only when no cycle of
$\succ_\Phi'$ contains a non-unanimity comparison. Every directed cycle of
$\tilde\succ_\Phi$ must then lie entirely in $\tilde\succ_u$, which is
acyclic. Hence $\tilde\succ_\Phi$ is acyclic.
\end{proof}

\bibliographystyle{plainnat}
\bibliography{biblio}
 
\end{document}